\newif\iffull
\def\BibTeX{{\rm B\kern-.05em{\sc i\kern-.025em b}\kern-.08em
    T\kern-.1667em\lower.7ex\hbox{E}\kern-.125emX}}
\begin{document}

\title{\dynhac{}: Fully Dynamic Approximate Hierarchical Agglomerative Clustering}

\newcommand{\mysubsection}[1]{\smallskip\noindent {\bf #1}}
\newcommand{\id}[1]{\ifmmode\mathit{#1}\else\textit{#1}\fi}
\newcommand{\const}[1]{\ifmmode\mbox{\textc{#1}}\else\textsc{#1}\fi}
\newenvironment{proofof}[1]{\bigskip \noindent {\bf Proof of #1:} }

\newcommand{\degree}[1]{\ensuremath{deg(#1)}}
\newcommand{\vertexid}{vertex ID}
\newcommand{\dynorient}{dynamic graph-orientation}
\newcommand{\trianglelink}{triangle-based linkage}
\newcommand{\bestedge}[1]{\ensuremath{\textsc{BestEdge}(#1)}}
\newcommand{\bestedgenoarg}{\ensuremath{\textsc{BestEdge}}}
\newcommand{\bestedgetext}{best-edge}
\newcommand{\union}[1]{\ensuremath{\textsc{Union}(#1)}}
\newcommand{\unionnoarg}[1]{\ensuremath{\textsc{Union}}}
\newcommand{\uniontext}[1]{union}
\newcommand{\merge}[1]{\ensuremath{\textsc{Merge}(#1)}}
\newcommand{\mergenoarg}{\ensuremath{\textsc{Merge}}}
\newcommand{\mergetext}{merge}
\newcommand{\heap}[1]{\ensuremath{\textsc{Heap}(#1)}}
\newcommand{\nghheap}[1]{neighbor-heap}
\newcommand{\mergecost}[1]{\ensuremath{\mathsf{Cost}(#1)}}
\newcommand{\totaledges}[1]{\ensuremath{\mathcal{T}(#1)}}
\newcommand{\staleness}[1]{\ensuremath{\mathcal{S}(#1)}}

\newcommand{\argmin}{\mathop{\mathrm{argmin}}\nolimits} 
\newcommand{\argmax}{\mathop{\mathrm{argmax}}\nolimits} 

\newcommand{\STAB}[1]{\begin{tabular}{@{}c@{}}#1\end{tabular}}

\newcommand{\parhac}{$\mathsf{ParHAC}$}
\newcommand{\seqhac}{$\mathsf{SeqHAC}$}
\newcommand{\optrac}{$\mathsf{OptimizedRAC}$}
\newcommand{\rac}{$\mathsf{RAC}$}
\newcommand{\uwalbucketmerge}{\parhac{}-BucketMerge}
\newcommand{\whp}[1]{\emph{whp}}
\newcommand{\sccsim}{SCC$_{\text{sim}}$}

\newcommand{\parhacexact}{\parhac{}$_{\mathcal{E}}$}
\newcommand{\parhacappx}{\parhac{}$_{0.1}$}
\newcommand{\seqhacexact}{\seqhac{}$_{\mathcal{E}}$}
\newcommand{\seqhacappx}{\seqhac{}$_{0.1}$}
\newcommand{\boruvka}{Bor\r{u}vka}

\newcommand{\cunused}{\ensuremath{C_{\mathsf{unused}}}}
\newcommand{\ctrue}{\ensuremath{C_{\mathsf{true}}}}
\newcommand{\cfalse}{\ensuremath{C_{\mathsf{false}}}}
\newcommand{\wunit}{\ensuremath{w_{\mathsf{unit}}}}
\newcommand{\gand}{\ensuremath{\mathsf{AND}}}
\newcommand{\gor}{\ensuremath{\mathsf{OR}}}

\newcommand{\graphheap}{GraphHAC-Heap}
\newcommand{\heapbased}{heap-based}
\newcommand{\graphchain}{GraphHAC-Chain}
\newcommand{\chainbased}{chain-based}

\newcommand{\singlelink}{single-linkage}
\newcommand{\completelink}{complete-linkage}
\newcommand{\avglink}{average-linkage}
\newcommand{\weightedavglink}{weighted average-linkage}
\newcommand{\unweightedavglink}{unweighted average-linkage}

\newcommand{\linkweight}[2]{\ensuremath{\mathcal{L}(#1, #2)}}
\newcommand{\combsim}[2]{\ensuremath{\mathcal{C}(#1, #2)}}

\newcommand{\hacweight}[2]{\ensuremath{\mathcal{W}(#1, #2)}}

\newcommand{\seqheap}{\ensuremath{\mathsf{Seq}\mhyphen\mathsf{Heap}}}
\newcommand{\seqchain}{\ensuremath{\mathsf{Seq}\mhyphen\mathsf{Chain}}}

\newcommand{\storedwgh}[1]{\ensuremath{W_{\mathcal{S}}(#1)}}
\newcommand{\truewgh}[1]{\ensuremath{W_{\mathcal{T}}(#1)}}
\newcommand{\cut}[1]{\ensuremath{\mathsf{Cut}(#1)}}

\newcommand{\unite}[2]{\ensuremath{\textsc{Unite}(#1, #2)}}
\newcommand{\find}[1]{\ensuremath{\textsc{Find}(#1)}}


\newcommand{\ccclass}{\ensuremath{\mathsf{CC}}}
\newcommand{\pclass}{\ensuremath{\mathsf{P}}}
\newcommand{\ncclass}{\ensuremath{\mathsf{NC}}}
\newcommand{\lclass}{\ensuremath{\mathsf{L}}}

\newcommand{\cchard}{\ensuremath{\mathsf{CC}}-hard}
\newcommand{\cccomplete}{\ensuremath{\mathsf{CC}}-complete}
\newcommand{\pcomplete}{\ensuremath{\mathsf{P}}-complete}
\newcommand{\cc}{\ensuremath{\mathsf{CC}}}
\newcommand{\nc}{\ensuremath{\mathsf{NC}}}

\newcommand{\new}[1]{{\color{purple} #1}}

\definecolor{best}{rgb}{0.0, 0.5, 0.0}
\newcommand{\best}[1]{\color{best}{\underline{#1}}}

\definecolor{munsell}{rgb}{0.0, 0.5, 0.69}
\definecolor{burntsienna}{rgb}{0.91, 0.45, 0.32}
\newcommand{\parcolor}[1]{{\color{munsell}{#1}}}

\newcommand{\myparagraph}[1]{\smallskip\noindent {\bf #1.}}

\newtheorem{definition}{Definition}
\crefname{observation}{Observation}{Observations} 
\crefname{claim}{Claim}{Claims}
\newcommand{\defn}[1]{\textbf{\emph{#1}}}

\newcommand{\minmerge}[0]{\ensuremath{\mathrm{M}}}
\newcommand{\wmax}[0]{\ensuremath{w_{\max}}}
\newcommand{\apxwmax}[0]{\ensuremath{\tilde{w}_{\max}}}
\newcommand{\apx}[1]{\ensuremath{\tilde{#1}}}

\newcommand{\terahac}{$\mathsf{TeraHAC}$}
\newcommand{\dynhac}{$\mathsf{DynHAC}$}

\newcommand{\subhac}{$\mathsf{SubgraphHAC}$}
\newcommand{\flatten}{$\mathsf{Flatten}$}
\newcommand{\wthreshold}{\ensuremath{t}}
\newcommand{\gness}{\ensuremath{\mathrm{goodness}}}
\newcommand{\scc}{\ensuremath{\mathsf{SCC}}}

\newcommand{\kuba}[1]{{\color{cyan} Kuba: #1}}
\newcommand{\shangdi}[1]{{\color{blue} Shangdi: #1}}
\newcommand{\laxman}[1]{{\color{purple} Laxman: #1}}
\newcommand{\jason}[1]{{\color{brown} Jason: #1}}
\newcommand{\nikos}[1]{{\color{orange} Nikos: #1}}

\definecolor{forestgreen}{rgb}{0.13, 0.55, 0.13}
\newcommand{\forestgreen}[1]{{\color{forestgreen}{#1}}}
\newcommand{\revision}[1]{{{#1}}}
\newcommand{\shepchange}[1]{{#1}}

\author{
Shangdi Yu\thanks{MIT}
\and 
Laxman Dhulipala\thanks{UMD}
\and 
Jakub {\L}{\k{a}}cki\thanks{Google}
\and 
Nikos Parotsidis\footnotemark[3]
}

\date{}

\maketitle

\fancyfoot[R]{\scriptsize{Copyright \textcopyright\ 2025 by SIAM\\
Unauthorized reproduction of this article is prohibited}}

\begin{abstract}
We consider the problem of maintaining a hierarchical agglomerative clustering (HAC) in the dynamic setting, when the input is subject to point insertions and deletions.
We introduce DynHac -- the first dynamic HAC algorithm for the popular \emph{average-linkage} version of the problem which can maintain a $1+\epsilon$ approximate solution.
Our approach leverages recent structural results on $1+\epsilon$-approximate HAC~\cite{terahac} to carefully identify the part of the clustering dendrogram that needs to be updated in order to produce a solution that is consistent with what a full recomputation from scratch would have output.

We evaluate DynHAC on a number of real-world graphs. We show that DynHAC can handle each update up to 423x faster than what it would take to recompute the clustering from scratch. At the same time it achieves up to 0.21 higher NMI score than the state-of-the-art dynamic hierarchical clustering algorithms, which do not provably approximate HAC.
\end{abstract}

\section{Introduction}\label{sec:intro}
Clustering is an unsupervised machine learning method that has been
widely used in many fields including computational biology, computer
vision, and finance to discover structures in a data
set\iffull
~\cite{irbook,
berkhin2006survey,Aggarwal2013,leibe2006efficient}
\fi
.
To group similar objects at all resolutions, a \emph{hierarchical
  clustering} can be used to produce a tree that represents
clustering results at different scales.
\emph{Hierarchical agglomerative clustering} (HAC) is the most prominent hierarchical clustering algorithm
\iffull
~\cite{murtagh2012algorithms,murtagh2017algorithms,mullner2011modern, mullner2013fastcluster, gronau2007optimal, stefan1996multiple}
\fi
which is particularly well-suited to finding a large number of highly precise clusters\iffull
~\cite{zhao2002evaluation, hua2017mgupgma, kobren2017hierarchical, blundell2013bayesian, culotta2007author, terahac, parhac}
\fi
.
The algorithm takes as input a collection of $n$ points and a function that gives the similarity of each pair of points.
It starts by putting each point in its own singleton clusters and then proceeds in up to $n-1$ steps.
In each step it finds the two most similar clusters, and \emph{merges} them together, that is it replaces them by their union.
The similarity between two clusters is defined by a \emph{linkage function} which maps the all-pair  point-to-point similarities between the points in both clusters to a single similarity value.
One popular similarity function is \emph{average--linkage}, where the similarity between the clusters is the average of all the individual similarities.
That is, for two clusters $C$ and $D$, their similarity is the sum of all-pairs similarities between points in $C$ and $D$ divided by $|C|\cdot |D|$.

The output of the algorithm is a \emph{dendrogram}: a rooted binary tree which describes all merges performed by the algorithm.
Specifically, each node of the dendrogram represents a cluster.
The leaves correspond to singleton clusters representing the input points.
Each internal node of the dendrogram represents the cluster obtained by merging its children.

Since specifying similarities between all $n \times n$ pairs of input points is infeasible for large datasets, recent work on scaling up HAC considered the graph-based version of the problem\iffull
~\cite{dhulipala2021hierarchical, parhac, terahac, monath2021scalable,NIPS2017_2e1b24a6}.
\else
~\cite{terahac}.
\fi
In graph-based HAC, the input is a (typically sparse) similarity graph, whose vertices represent points, and each edge specifies the similarity between its endpoints.
If the input is a metric space, a natural approach is to build the $k$-nearest neighbor graph, that is a graph, where each point is connected to its $k$ most similar other points.
Using graph-based representation and allowing approximation, average linkage HAC has been scaled to datasets of billions of points~\cite{parhac, terahac}.

Contemporary data exists in a constant state of flux, e.g, users interact with platforms in frequent intervals, or financial transactions occur very frequently. This has lead researchers to focus on maintaining solutions to problems like clustering in the \emph{dynamic} setting where nodes and edges are being inserted and deleted and the objective is to adjust the solution efficiently to reflect the new state of the data. This is not an easy task, and studies often settle for algorithms that work in the (restricted) incremental only setting, i.e., where nodes or edges are inserted but not deleted.

Our goal is to maintain a HAC dendrogram in the \emph{fully dynamic} setting, that is updated under a sequence of both insertions and deletions.
Our main focus is to obtain highly precise clustering. Because of this, we use the average-linkage similarity function, which is known to deliver excellent empirical quality
\iffull
~\cite{zhao2002evaluation, hua2017mgupgma, kobren2017hierarchical, blundell2013bayesian, culotta2007author, terahac, parhac}
\else
~\cite{terahac, parhac}
\fi
and strive to obtain rigorous theoretical quality guarantees on the output dendrogram.

Our main contribution is \dynhac{} -- a dynamic HAC algorithm maintaining a $1+\epsilon$ approximate average-linkage dendrogram under point insertions and deletions. We use the notion of approximation introduced by Moseley et al.\iffull
~\cite{48657, dhulipala2021hierarchical,parhac, terahac}.
\else
~\cite{48657}.
\fi
Namely, a $1+\epsilon$ approximate HAC algorithm is allowed to merge two clusters with similarity at most a $1+\epsilon$ factor away from the similarity of the two most similar clusters.

The main challenge in developing an efficient dynamic HAC algorithm is the sensitivity of the output to even small changes in the input.
In particular, one can easily design instances where inserting even a single node or edge causes the resulting dendrogram to change completely.
Interestingly, it was recently shown that even if one allows super-constant approximation, maintaining average-linkage HAC in a dynamic setting requires $n^{\Omega(1)}$ time per update in the worst case~\cite{tseng2022parallel}.
However, the very hard instances for dynamic HAC require a very particular structure, and so they do not preclude dynamic HAC from working efficiently on real-world instances.

Our algorithm builds on the ideas behind \terahac{}~\cite{terahac}, a recently introduced distributed HAC algorithm.
\terahac{} partitions the nodes into disjoint partitions and then independently runs an algorithm called \subhac{} within each partition.
The goal of \subhac{} is to perform a certain number of $1+\epsilon$ approximate HAC steps within the partition.
Crucially, in order to achieve $1+\epsilon$ approximation, \subhac{} uses a carefully chosen stopping condition to ensure that the merges performed within each partition are consistent with what a $1+\epsilon$ HAC algorithm would have performed if it was run on the entire graph.
We note that this condition is based on the nodes in the partition and the set of its incident edges, that is edges that have at least one endpoint in the partition.
Once all such intra-partition merges have been performed,
the graph is partitioned again and the above step is repeated.
We call each step of the above algorithm a \emph{round}.
On real-world datasets the graph typically shrinks by a constant factor in each round, and so the total number of rounds is small.

\dynhac{} uses a similar partitioned approach.
We leverage the fact that if an edge is inserted within a partition $P$, no other partitions within the same level are affected.
For the affected partition $P$, we simply run \subhac{} from scratch.
However, with multiple levels, the problem becomes more intricate.
A single change within the partition $P$ may cause \subhac{} to perform a very different set of merges.
This in turn may cause a nontrivial change to the graph handled in the next level.
Moreover, in addition to having to propagate changes to the graph, we also need to dynamically maintain the partitions to ensure that their size is balanced and ensure that \subhac{} runs efficiently.
Hence, on each level we need to carefully decide which partitions to recompute based on the changes of the previous level, and the updates to the partitioning that are taking place.

We experimentally evaluate \dynhac{} and compare it to state of the art dynamic hierarchical clustering algorithms.
Compared to static algorithms, we observe that \dynhac{} delivers up to a 423x speedup over recomputing from scratch after each update.
Compared to existing dynamic hierarchical clustering algorithms, none of which provably approximate HAC, we observe that \dynhac{} achieves up to 0.21 higher NMI score, showcasing the value of provable approximation guarantees in practice.

Our code, data, and a full version of our paper can be found at \url{https://github.com/yushangdi/dynamic-hac}.

\subsection{Related Work}

Several studies have considered the problem of dynamically maintaining a hierarchical clustering \cite{menon2019online, zhang1996birch, garg2006pbirch, kobren2017hierarchical, monath2019scalable, monath2023online, charikar2019hierarchical, cohen2019hierarchical, dasgupta2016cost, moseley2023approximation, rajagopalan2021hierarchical, vainstein2021hierarchical}. 
However, the more efficient algorithms can only process incremental updates. We further expand on related work in the appendinx.

The work that is most similar to ours is \cite{monath2023online}. The authors consider multiple incremental clustering approaches both top-down and bottom-up. 
Their approaches include 1) modifications of the top-down Stable Greedy (SG) Trees approach by~\cite{zaheer2019terrapattern} \iffull that also allow re-evaluating the greedy choices in a second phase\fi, and 
2) dynamic versions of bottom-up approaches like the RecipNN method~\cite{sumengen2021scaling} which is an efficient implementation of HAC, and Affinity clustering~\cite{bateni2017affinity}. 
The authors show that the bottom-up exhibit better tradeoffs between quality and running time.
While their Online RecipNN algorithm from \cite{monath2023online} is supposed to be a dynamic version of exact HAC, some simplifications in the implementation make it diverge from the exact algorithm.
As a result, their experimental analysis concluded that the method often produces inferior quality results compared to other methods, and its running time is not competitive.
By maintaining a $(1+\epsilon)$-approximate HAC dendrogram, we improve the performance of the algorithm by allowing some approximation, and at the same time consistently maintain a high quality clustering.
Similarly to the case of static algorithms~\cite{parhac, terahac}, we observe that leveraging approximations leads to significant speedups.

\section{Preliminaries}

Let $G=(V, E, w)$ be a weighted and undirected graph, where $|V| = n$ and $w$ is the edge-weight function.
We assume all edge weights are positive.
$V(G)$ and $E(G)$ denote the vertex and edge sets of $G$, respectively.

In order to define $(1+\epsilon)$-approximate HAC we describe a sequential static $(1+\epsilon)$ approximate HAC algorithm, which we refer to as \seqhac.
\seqhac{} maintains a graph, where each vertex is a cluster.
Initially, each vertex is a singleton cluster, and so the state is represented by the input graph $G$.
In addition, it also maintains the \emph{size} of each vertex of $G$ (which is initially $1$).
We define the \emph{normalized} weight of an edge $xy$ as $\bar{w}(xy) = w(xy) / (S(x) \cdot S(y))$, where $S(x)$ and $S(y)$ are the sizes of $x$ and $y$ respectively.

\seqhac{} proceeds as follows.
While $G$ has at least one edge, we pick an edge $xy$ whose normalized weight satisfies $\bar{w}(xy) \geq \bar{w}(uv) / (1+\epsilon)$, where $uv$ is the edge with the highest normalized weight in $G$.
Then, the algorithm \emph{contracts} the edge $xy$. 
We refer to this event as a \emph{merge} of $x$ and $y$ of linkage similarity $\bar{w}(xy)$.
Contraction of $xy$ merges $x$ and $y$ into one vertex $z$ with size $S(z) = S(x) + S(y)$.
The parallel edges that are created are merged into one, and the corresponding edge weights are summed.
Finally, we remove self-loops.
Notice that when $\epsilon = 0$, \seqhac{} is equivalent to the algorithm sketched in \cref{sec:intro}.

Observe that \seqhac{} can produce multiple valid outputs, given that it can contract any edge of sufficiently high weight in each step.
We say that any valid output of \seqhac{} is a $(1+\epsilon)$-approximate dendrogram.
Moreover, any algorithm which always produces a $(1+\epsilon)$-approximate dendrogram is called $(1+\epsilon)$-approximate HAC.

\boldsymbol{\terahac{}} \textbf{Algorithm.}
\terahac{}~\cite{terahac} is a {\em parallel} $(1+\epsilon)$ approximate HAC algorithm.
The \dynhac{} algorithm that we introduce in this paper is essentially a dynamic version of \terahac{}, and so we now briefly outline how \terahac{} works.
\terahac{} proceeds in multiple rounds.
In each round it computes a partition $\mathcal{P}$ of the vertices of $G$, i.e., $\bigcup_{p \in \mathcal{P}} p = V(G)$, and contracts some edges within each partition.

\begin{definition}[Partition subgraph]\label{def:ps}
Let $\mathcal{P}$ be a partition of $V$.
For each $p \in \mathcal{P}$ we define the \emph{partition subgraph} of $p$, denoted by $H_p$, as follows.
The vertex set of $H_p$ is the set of vertices $p$ and all their neighbors in $G$.
The edge set of $H_p$ is the set of all edges incident to a vertex in $p$.
For each $H_p$, we say that the vertices of $p$ are \emph{active}, and the remaining ones are \emph{inactive}.
\end{definition}

Observe that any edge $xy$ of $G$ is in at most $2$ subgraphs $H_p$.
Moreover, the number of vertices in $V(H_p) \setminus p$ is at most the number of edges in $H_p$.
As a result, all subgraphs $H_p$ (for $p \in \mathcal{P}$) have $O(|E(G)|)$ vertices and edges in total.

\terahac{} then runs a restricted HAC algorithm on each partition subgraph.
We call this algorithm \subhac{}.
\subhac{}, given a partition subgraph $H_p$, merges some pairs of vertices in $H_p$ following two important constraints.
First, it only merges active vertices of $H_p$, which ensures that all \subhac{} calls within a round  merge disjoint sets of vertices.
The vertex obtained by merging two active vertices is considered active as well.
Second, all merges made by \subhac{} are provably consistent with what a $(1+\epsilon)$ approximate HAC algorithm would do if it was working on the entire graph.
To this end, \subhac{} leverages the following notion to decide whether a certain pair of vertices can be merged.

\begin{definition}[Good merge\cite{terahac}]\label{def:good}
Let $\epsilon \geq 0$ and $G_i$ be a graph obtained from $G$ by performing some sequence of merges.
For each vertex $v$ of $G_i$, we define $\minmerge(v)$ to be the smallest linkage similarity among all merges performed to create vertex (cluster) $v$.
Specifically, for each vertex $v$ of size $1$ we have $\minmerge(v) = \infty$, and when two vertices $u$ and $v$ merge to create a vertex $z$, we have $\minmerge(z) = \min(\minmerge(u), \minmerge(v), \bar{w}(uv))$.
Moreover, for each vertex $v$ we use $\wmax(v)$ to denote the highest normalized weight of any edge incident to $v$.
With this notation, we say that a merge of an edge $uv$ in $G_i$ is $(1+\epsilon)-$\emph{good} if and only if
\[
\frac{\max(\wmax(u), \wmax(v))}{\min(\minmerge(u), \minmerge(v), \bar{w}(uv))} \leq 1+\epsilon.
\] 
\end{definition}

\subhac{} ensures that each merge that it produces is $(1+\epsilon)$ good and leverages the following key property shown in~\cite{terahac}.

\begin{lemma}\label{lemma:dendrogram}
Any dendrogram produced by a sequence of $(1+\epsilon)$-good merges is $(1+\epsilon)$ approximate.
\end{lemma}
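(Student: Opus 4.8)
The plan is to prove, by induction on the number of points $n$, that every dendrogram $T$ produced by a sequence of $(1+\epsilon)$-good merges on a graph $G$ is a valid output of \seqhac{} on $G$; by the definition of an $(1+\epsilon)$-approximate dendrogram this suffices. If $G$ has no edges the merge sequence is empty and there is nothing to prove, so assume $G$ has an edge and fix an edge $e^\star = u^\star v^\star$ with $\bar{w}(e^\star) = \bar{w}_{\max}(G)$. The crux is a structural claim: $T$ contains an internal node whose two children are both leaves (a \emph{base merge}) and whose linkage similarity is at least $\bar{w}_{\max}(G)/(1+\epsilon)$. Given this, contracting the corresponding edge is a legal first step of \seqhac{} on $G$; we then peel that merge off and recurse on the smaller instance.

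To establish the structural claim, assume without loss of generality that the first merge involving $v^\star$ in the given sequence is not performed earlier than the first merge involving $u^\star$, and let $z_0$ be the first merge involving $u^\star$ (equivalently, the parent of the leaf $u^\star$ in $T$). If $z_0$ also involves $v^\star$, then, since neither vertex has merged with anything before $z_0$, the merge $z_0$ is precisely $\{u^\star\}\cup\{v^\star\}$, a base merge of similarity $\bar{w}(u^\star v^\star)=\bar{w}_{\max}(G)$, and we are done. Otherwise $v^\star$ is still a singleton when $z_0$ is performed, so the edge $u^\star v^\star$ still has normalized weight $\bar{w}_{\max}(G)$, whence $\wmax(u^\star)\ge \bar{w}_{\max}(G)$. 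Writing $z_0$ as the merge of $\{u^\star\}$ with its sibling cluster $A$, and using $\minmerge(u^\star)=\infty$, the $(1+\epsilon)$-goodness of $z_0$ gives $\bar{w}_{\max}(G)\le (1+\epsilon)\min(\minmerge(A),\bar{w}(u^\star A))$. Hence the similarity $\bar{w}(u^\star A)$ of $z_0$ is at least $\bar{w}_{\max}(G)/(1+\epsilon)$, and also $\minmerge(A)\ge \bar{w}_{\max}(G)/(1+\epsilon)$. If $A$ is a single leaf, $z_0$ is the desired base merge. If $A$ is an internal node, its subtree contains at least one base merge, and since every merge forming $A$ has similarity at least $\minmerge(A)$, any such base merge has similarity at least $\bar{w}_{\max}(G)/(1+\epsilon)$.

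Let $z^\star$ be a base merge as above, contracting an edge $ab$ of $G$ with $\bar{w}(ab)\ge \bar{w}_{\max}(G)/(1+\epsilon)$; let $G'$ be $G$ with $ab$ contracted, and let $T'$ be $T$ with the leaves $a,b$ fused into a single leaf. I claim $T'$ is produced by a sequence of $(1+\epsilon)$-good merges on $G'$, namely the original sequence with $z^\star$ removed. No merge preceding $z^\star$ can be incident to $a$ or $b$, for otherwise $z^\star$ would not merge two singletons; so these merges make sense on $G'$, and fusing $a$ with $b$ only decreases normalized edge weights — for any common neighbor $x$ the new edge has normalized weight $\frac{w(xa)+w(xb)}{2\,S(x)}\le\frac{\max(w(xa),w(xb))}{S(x)}$, and this persists as the further (non-$ab$) merges are applied — hence only decreases every $\wmax(\cdot)$, while leaving every $\minmerge(\cdot)$ and every contracted-edge weight unchanged; so goodness of all these merges is preserved. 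Moreover, since the merges preceding $z^\star$ commute with $z^\star$, after performing them on $G'$ the graph coincides with the one obtained in the original execution immediately after $z^\star$; the merges following $z^\star$ are therefore still valid and $(1+\epsilon)$-good, and together they build $T'$. By the induction hypothesis $T'$ is $(1+\epsilon)$-approximate on $G'$; prepending the \seqhac{} step that contracts $ab$ — legal because $\bar{w}(ab)\ge \bar{w}_{\max}(G)/(1+\epsilon)$ — produces a valid \seqhac{} execution with output $T$, completing the induction. (If the given sequence is not complete, the same peeling yields a valid \emph{prefix} of a \seqhac{} run, which extends to a complete one.)

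The main obstacle is the preservation of $(1+\epsilon)$-goodness when we pre-contract the base merge $z^\star$: it relies on the monotonicity observation that merging two vertices can only lower normalized edge weights, hence $\wmax$ values, combined with the routine but necessary verification that all $\minmerge$ values and the weights of all other contracted edges are left intact. The structural claim of the second paragraph is the other subtle point, since it is exactly where the definition of a good merge is exploited — through the value $\minmerge=\infty$ at singletons and the case analysis on which of $u^\star,v^\star$ is absorbed first.
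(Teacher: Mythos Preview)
The paper does not prove this lemma at all; it simply cites it from~\cite{terahac}. So there is no in-paper argument to compare your proposal against. On its own merits, your peeling argument is essentially correct, but there is one gap you should patch.

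You induct ``on the number of points $n$,'' peel off a base merge $z^\star = ab$, and then invoke the induction hypothesis on $G'$. But $G'$ contains a vertex of size~$2$ (namely $\{a,b\}$), so it is no longer an instance of the statement you are inducting on, which implicitly starts from singletons. Concretely, your structural paragraph uses $\minmerge(u^\star)=\infty$; in a recursive call the role of $u^\star$ could be played by the fused vertex $\{a,b\}$, for which this equality fails. The fix is painless: strengthen the induction hypothesis to graphs with arbitrary initial vertex sizes and arbitrary initial $\minmerge$ values, and set $\minmerge(\{a,b\})=\bar w(ab)$ in $G'$. Your structural claim never actually needs $\minmerge(u^\star)=\infty$ --- the goodness inequality for $z_0$ already yields
\[
\bar w_{\max}(G)\;\le\;\wmax(u^\star)\;\le\;(1+\epsilon)\,\min\bigl(\minmerge(u^\star),\minmerge(A),\bar w(u^\star A)\bigr)\;\le\;(1+\epsilon)\,\min\bigl(\minmerge(A),\bar w(u^\star A)\bigr),
\]
which is all you use downstream. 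With the hypothesis so generalized, your monotonicity argument (pre-contracting $ab$ can only lower each $\wmax(\cdot)$, being a weighted average of the two old incident normalized weights, while leaving every other $\minmerge(\cdot)$ and contracted-edge similarity intact) and the commutation of $z^\star$ with the earlier merges go through unchanged, and the induction closes. Your final parenthetical about incomplete sequences is also fine once you observe that if the sequence leaves the edge $u^\star v^\star$ unmerged you may simply append that merge (it is trivially good) before running the argument.
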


The \dynhac{} algorithm uses \subhac{} in a black-box way.
We note that in addition to the current graph \subhac{} must also be provided the minimum linkage similarities $M(\cdot)$ of any vertex.

\begin{table}[ht]
\centering
\begin{tabular}{c l}
\hline
Symbol & Meaning \\ \hline
$V_i$ & vertices to insert to $G_i$ in round $i$. \\
$V^d_i$ & vertices to delete from $G_i$ in round $i$.\\
$N_G(V)$ & The union of neighbors of $v \in V$. \\
$P_i$ & Partition of vertices in $G_i$.\\
$\texttt{VMap}_i$ & Maps $V(G_i)$ to the vertices they\\
& contracted to in $G_{i+1}$.\\
$\minmerge$ & the smallest linkage similarity used\\ & to create a cluster.\\
\hline
\vspace{-0.8cm}
\end{tabular}
\end{table}
\section{\dynhac{} Algorithm}
In this section we describe the \dynhac{} algorithm, which is a dynamic version of \terahac{}.
Following \terahac{}, \dynhac{} proceeds in rounds.
In each round it partitions the graph, runs \subhac{} on each partition subgraph and then obtains the input to the next round by applying all the merges from all \subhac{} calls.
We denote by $G_i$ the graph which is the input to round $i$. Here, $G_1$ is the input graph on which we run HAC.

The main principle behind \dynhac{} is as follows. Whenever a partition subgraph changes, we rerun \subhac{} on this subgraph.
Since the only input to \subhac{} is a partition subgraph (together with the corresponding minimum merge similarities), we do not have to rerun \subhac{} on the partition subgraphs that have not changed.
We note, however, that the partition subgraph of a partition $p$ can contain an edge $xy$, such that $y \not\in p$.
Since the normalized weight of the edge $xy$ depends on the size of $y$, the partition subgraph $H_p$ can change as a result of a change to a vertex outside of $p$.

\textbf{Partitioning.}
Because \dynhac{} is dynamic, the partitioning used in each round needs to be updated dynamically.
Moreover, as we rerun \subhac{} from scratch upon a change to a partition subgraph, we want the partitions to be relatively small.
We use the following simple partitioning scheme.
We first randomly assigning the colors red and blue to each vertex with equal probability.
Then, each partition consists of the vertices that have the same partition id, according to the following definition.

\begin{definition}[Partition id]\label{def:pid}
We define the \emph{partition id} of each vertex $v$ as follows.
If $v$ is either red or does not have a blue neighbor, the partition id is $v$.
Otherwise, i.e., when $v$ is blue and has a red neighbor, the partition id of $v$ is the id of its highest normalized weight red neighbor.
\end{definition}

We note that the choice of the partitioning algorithm only affects the running time of the entire clustering algorithm (and not its correctness). 

\textbf{Data to dynamically maintain.}
The \dynhac{} algorithm dynamically maintains the following:
\vspace{-0.15cm}
\begin{itemize}
    \item For each round i:
    \vspace{-0.15cm}
    \begin{itemize} \item $G_i$: the input graph to this round,
    \item  $P_i$: partition of the vertices in $G_i$,
    \item $\texttt{VMap}_i$: a map from vertices in $G_i$ to the corresponding vertices in $G_{i+1}$. If a vertex is not contracted in round $i$, it maps to itself.
    \end{itemize}
    \vspace{-0.2cm}
    \item $\minmerge$: minimum merge similarity of each vertex in each $G_i$ (Definition~\ref{def:good}),
    \vspace{-0.15cm}
    \item $\mathcal{D}$: the $(1+\epsilon)$-approximate dendrogram.
\end{itemize}

\vspace{-0.15cm}
\textbf{Handling an update.}
The algorithm for handling an update is given as \Cref{alg:dynhac}.
It takes as input the vertices and edges to insert $V, E$,
the vertices to delete $V^d$, the existing dendrogram $\mathcal{D}$, and a weight threshold $\wthreshold > 0$. The weight threshold $\wthreshold$ is used to terminate the algorithm once it performs all merges of sufficiently high similarity.
For simplicity, we assume that $t$ is the same across all updates, i.e., we maintain the dendrogram up to some linkage similarity.
The effect of this parameter will be discussed later. 

Handling an update starts by adding leaf nodes to the dendrogram and assigning $\infty$ to $\minmerge(v)$ for the newly inserted nodes (Lines~\ref{alg:dynhac:addleaves}--\ref{alg:dynhac:init_min_merges}).
Then, the algorithm updates each round one by one (Line~\ref{alg:dynhac:for}).
In each round it updates the partitions of the graph, and runs \subhac{} on each affected partition subgraph. 
Thus, if the update in a round is not incident to a partition $p$, the merges in $p$ are all still $(1+\epsilon)$-good and $p$ does not need to be re-clustered in this round.

\begin{algorithm}[t]
\caption{\dynhac{}-$\epsilon$}\label{alg:dynhac}
\begin{algorithmic}[1]
\State \textbf{Input:} $V, E, V^d$, $\wthreshold \geq 0$
\State \textbf{Update:} $\mathcal{D}$, $\minmerge$, $\{G_i, P_i, \texttt{VMap}_i\}$
\State $\mathcal{D}$.AddLeaves($V$) \Comment{Add leaf nodes to $\mathcal{D}$} \label{alg:dynhac:addleaves}
\State $\minmerge(v) = \infty$  for $v \in V$ 
\label{alg:dynhac:init_min_merges}
\State $V_1, E_1, V^d_1 = V, E, V^d$  \label{alg:dynhac:init_change}
\For{$i \in \{1 \dots \infty\}$} \label{alg:dynhac:for}
\State $V_{i+1}, E_{i+1}, V^d_{i+1} = $ DynHacRound($V_i$, $E_i$, $V^d_i$)\label{alg:dynhac:round}
\If{$G_i$ has no edge with weight $> \wthreshold / (1+\epsilon$)}\label{alg:dynhac:stop}
\State Empty rounds $\{i +1, \dots, \infty \}$
\State Remove the ancestors of $V(G_i) \cup V^d$ in $\mathcal{D}$
\State $\mathcal{D}$.RemoveLeaves($V^d$)
\State \textbf{break};
\EndIf
\EndFor
\end{algorithmic}
\end{algorithm}

\textbf{Updating a round.}
The process of updating a single round is shown as Algorithm~\ref{alg:dynhacround}.
For each round, we 1) update $G_i$, $\texttt{VMap}_i$, $P_i$, $\minmerge$ and $\mathcal{D}$ according to the inserted/deleted vertices and edges, and 2) compute the new vertices and edges ($V_{i+1}, E_{i+1}$) to add to and vertices ($V^d_{i+1}$) to delete from the next round to satisfy the invariant. The computation only depends on the state of the previous round. 

When a graph $G_i$ has no edge of weight $> \wthreshold / (1+\epsilon)$ (Line~\ref{alg:dynhac:stop}), no more merges are needed, so we cleanup and stop. To cleanup, we empty all future rounds and remove the dendrogram ancestors of deleted nodes $V^d$ and vertices in the last graph $V(G_i)$. We also remove the deleted leaf nodes in the dendrogram.

\begin{algorithm}[t]
\caption{\dynhac{} Round}\label{alg:dynhacround}
\begin{algorithmic}[1]
\State \textbf{Input:} $V_i$, $E_i$, $V^d_i$
\State \textbf{Output:} $V_{i+1}, E_{i+1}$, $V^d_{i+1}$ 
\State \textbf{Update:} $G_i$, $P_i$, $\minmerge$, $\texttt{VMap}_i$, $\mathcal{D}$
\State $\Delta_P \gets$ UpdatePartition($G_i$, $V_i$, $E_i$, $V^d_i$, $P_i$) \label{alg:dynhacround:updatepartition}
\State $G_i \gets G_i \cup (V_i, E_i) \setminus V^d_i$ \Comment{Update graph} \label{alg:dynhacround:updategraph}
\If{$V_{i+1}$ is empty}  \Comment{Reached last round} \label{alg:dynhacround:if}
\State $P_{\text{dirty}} \gets $ all partitions 
\Else
\State $P_{\text{dirty}} \gets $ DirtyPartitions($\Delta_P$, $G_i$)  \label{alg:dynhacround:dirtypartitions}
\EndIf

\State $V_{i+1}, E_{i+1}, V^d_{i+1} = \{\}$
\For{$p \in P_{\text{dirty}}$} 
    \State $H_p \gets $ Subgraph($G_i$, $p$) \label{alg:dynhacround:subgraph}
    \State $\texttt{merges}, \mathcal{D}_p, H_p^{\text{c}} \gets $ SubgraphHAC($H_p$, $\minmerge$, $\epsilon$)  \label{alg:dynhacround:subgraphhac}
    \State UpdateDendrogram(\texttt{merges}, $\mathcal{D}$) \label{alg:dynhacround:updated}
    \State UpdateMinMergeSim(\texttt{merges}, $\minmerge$) \label{alg:dynhacround:updatem}
    \State $V^d_{i+1}\gets V^d_{i+1} \cup $ UpdateVMap($\mathcal{D}_p$, $p$,  $V^d_i$, $\texttt{VMap}_i$)\label{alg:dynhacround:updatevmap}
    \State $H_p^{\text{c}} \gets $ contract inactive vertices in $H_p^{\text{c}}$ based on $\texttt{VMap}_i$. \label{alg:dynhacround:contract}
    \State $V_{i+1} \gets V_{i+1} \cup V_{\text{active}}(H^{\text{c}}_p) \setminus V(G_{i+1})$ \label{alg:dynhacround:newv}
    \State $E_{i+1} \gets E_{i+1} \cup  E(H_p^{\text{c}})$ \label{alg:dynhacround:newe}
\EndFor
\State Return $(V_{i+1}, E_{i+1})$, $V^d_{i+1}$
\end{algorithmic}
\end{algorithm}


We now describe the order of updating the state, and then explain the details for updating each object.
\begin{enumerate}
    \vspace{-0.15cm}
    \item We first update the partitioning $P_i$, and obtain the vertices that changed partition ids due to insertions and deletions, along with their partition ids before and after the update, $\Delta_P$ (Line~\ref{alg:dynhacround:updatepartition}). 
    \vspace{-0.15cm}
    \item In Line~\ref{alg:dynhacround:updategraph} we update the graph in this round. When a vertex is removed, all its neighboring edges are removed as well.
    \vspace{-0.15cm}
    \item Lines~\ref{alg:dynhacround:if}--\ref{alg:dynhacround:dirtypartitions} compute the partitions that we rerun \subhac{} on, which we call the \textit{dirty partitions} ($P_{\text{dirty}}$). If the graph in the next round has no vertex, then we've reached the last round, but clustering has not finished, so all partitions are dirty and we need to cluster all of them. Otherwise, we find the dirty partitions based on $\Delta_P$.
    \vspace{-0.6cm}
    \item For each dirty partition $p$, we consider the partition subgraph $H_p$ (see \cref{def:ps}).
    We run the $(1+\epsilon)$ \subhac{} algorithm~ \cite{terahac}, which merges some pairs of active vertices. (Lines~\ref{alg:dynhacround:subgraph}--\ref{alg:dynhacround:subgraphhac}). As a result of running \subhac{}, we obtain a merge sequence $\texttt{merges}$, the resulting dendrogram $\mathcal{D}_p$, and $H_p^{\text{c}}$, which is $H_p$ after applying the merges performed by \subhac{}.
    \vspace{-0.5cm}
    \item In Lines~\ref{alg:dynhacround:updated}--\ref{alg:dynhacround:updatem}, we update the overall dendrogram $\mathcal{D}$ and the minimum similarities $\minmerge$ based on the obtained new merges.
    \vspace{-0.15cm}
    \item In Line~\ref{alg:dynhacround:updatevmap}, we update $\texttt{VMap}_i$ and compute the vertices to delete from the next round.
    \vspace{-0.15cm}
    \item Finally, in Lines~\ref{alg:dynhacround:contract}--\ref{alg:dynhacround:newe}, we obtain new vertices to insert to the next round from the contracted graph. Some vertices may already exist in the next round (we made the same merges as before), these vertices are excluded. The new edges to insert are also obtained from the contracted graph. Intuitively, these are the vertices and edges to insert because $G_{i+1}$ is just the graph $G_i$ contracted according to merges in round $i$.
\end{enumerate}

\subsection{Dynamic update of dendrogram and auxiliary data.}

\textbf{Update partitioning $P_i$.}
We observe that in order to update the partition ids to restore the property of~\cref{def:pid} \textit{we just need to re-compute the partition ids of the new vertices, the neighbors of new vertices, and the neighbors of deleted vertices} (see \cref{lemma:changepartition}). After we compute their new partition ids, we return these vertices $V_{\text{new and neigh}}$ and their partition ids before and after the update. Note that this is a superset of vertices that actually changed partition ids, since some of the partition ids might not change. We still return them because they are useful when computing the dirty partitions.

To compute the new partition id of a blue vertex, we just need to scan the neighbors of the vertex. 
For the neighbors $i$ of new vertices $j$, we can optimize this step to simply check if the new edge $(i, j)$ has a higher normalized weight than the that of the edge between $i$ and its previous heaviest red neighbor.

\begin{restatable}{lemma}{changepartition}\label{lemma:changepartition}
Consider a graph $G$ and a graph $G' = G \cup (V, E) \setminus V^d$, which is obtained from $G$ by adding a set of vertices $V$ and edges $E$, and deleting vertices of $V^d$.
Assume that each edge of $E$ is incident to a vertex in $V$ and not incident to any vertex in $V^d$.
Let $U \subseteq V(G')$ be the set of vertices of $G'$ which have different partition ids in $G$ and $G'$.
Then, $U \subseteq V \cup N_{G'}(V) \cup (N_{G}(V^d) \setminus V^d)$.
\end{restatable}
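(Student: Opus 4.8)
The plan is to prove the contrapositive: I will fix an arbitrary vertex $v \in V(G')$ with $v \notin V \cup N_{G'}(V) \cup (N_G(V^d) \setminus V^d)$ and show that $v$ has the same partition id in $G$ and in $G'$. Since $v \in V(G') = (V(G) \setminus V^d) \cup V$ and $v \notin V$, the vertex $v$ already exists in $G$ and was not deleted, so in particular its color is the same in both graphs (the update does not change the colors of vertices already present in $G$; only freshly inserted vertices receive a color).

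The heart of the argument is to show that the entire ``local view'' that \cref{def:pid} uses to compute the partition id of $v$ --- namely the set of edges incident to $v$, their normalized weights, and the colors of the other endpoints --- is identical in $G$ and $G'$. First I would rule out deleted edges: if some edge $uv$ of $G$ is absent in $G'$, then it was removed by deleting one of its endpoints, so $u \in V^d$; but then $v$ is a $G$-neighbor of a vertex of $V^d$ and $v \notin V^d$, i.e. $v \in N_G(V^d) \setminus V^d$, contradicting the choice of $v$. Next I would rule out new edges: an edge incident to $v$ in $G'$ but not in $G$ lies in $E$, hence by hypothesis is incident to some $u \in V$; since $v \notin V$, the other endpoint of that edge is $v \neq u$, making $v$ a $G'$-neighbor of a vertex of $V$, i.e. $v \in N_{G'}(V)$, again a contradiction. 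Hence the edges incident to $v$, together with their weights, coincide in $G$ and $G'$; every endpoint of such an edge lies in both graphs, so (because inserting or deleting vertices and edges does not change the size $S(\cdot)$ of a vertex present in both graphs) the normalized weights coincide as well, and the colors of $v$'s neighbors coincide. Plugging this into \cref{def:pid} --- the partition id of $v$ is $v$ itself if $v$ is red or has no red neighbor, and otherwise is the id of $v$'s maximum-normalized-weight red neighbor --- shows the partition id of $v$ is unchanged, which completes the proof.

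The step I expect to require the most care is not the case analysis over which edges survive (that part is short), but the assertion that the normalized weights of the surviving edges are genuinely unchanged: this relies on the fact that the sizes $S(\cdot)$ of all vertices persisting from $G$ to $G'$ are unaffected by the update, which I would state explicitly since $\bar{w}$ depends on these sizes. A secondary, minor point is that the ``highest normalized weight red neighbor'' in \cref{def:pid} must be selected by a fixed, purely local tie-breaking rule (e.g. by vertex id), so that identical local views yield identical partition ids; this is implicit in the definition but worth noting for rigor.
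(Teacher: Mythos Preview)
Your proof is correct and follows essentially the same approach as the paper: both argue that a vertex whose partition id changes must either be newly inserted or have its neighborhood altered, hence must lie in $V$, in $N_{G'}(V)$, or in $N_G(V^d)\setminus V^d$. Your contrapositive formulation is simply a more carefully fleshed-out version of the paper's two-line argument, and your remarks about the stability of $S(\cdot)$ and tie-breaking are valid points of rigor that the paper leaves implicit.
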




\textbf{Identify dirty partitions.}
In each round of the update our algorithm identifies a set of \emph{dirty partitions}, that is a set of partitions for which it reruns \subhac{} from scratch.
At a high level, we would like to identify all partitions in which, after the update, some merge performed in the current dendrogram is no longer $(1+\epsilon)$-good.
We call these partitions \emph{truly dirty}.
For correctness, we show that the set of dirty partitions identified by our algorithm is a superset of the set of truly dirty partitions.

Efficiently identifying the dirty partitions is challenging because there are multiple cases to consider.
In particular, a partition is considered dirty when:
\begin{itemize}
\vspace{-0.15cm}
\item a vertex is added to/deleted from the partition, and this can be caused by a vertex addition or deletion, or by a partition id change,
\vspace{-0.15cm}
\item the set of edges leaving a partition changes,
\vspace{-0.15cm}
\item a new partition is introduced.
\end{itemize}


\begin{definition}\label{def:delta_p}
Let $\Delta_P$ contain the partition id (see \cref{def:pid}) before and after the update of the new vertices, the deleted vertices, the neighbors of new vertices, and the neighbors of deleted vertices. 
\end{definition}

Our simple algorithm for identifying dirty partitions based on $\Delta_P$ is in Algorithm~\ref{alg:dirtypartition} in Appendix.

Given $V_{\text{new and neigh}}$ and their partition ids before and after the update, all new partitions are marked as dirty. For old partitions, we mark them as dirty if it is still in the new graph (not deleted) and it is not the same as a blue vertex. Note that each partition id is also a vertex id.

\textbf{Update dendrogram.} 
In Algorithm~\ref{alg:updatedendrogram}, we show how to update the dendrogram given the merges performed by \subhac{} in one partition subgraph. The input \texttt{merges} should be a sequence of merges that are consistent with $\mathcal{D}_p$ -- the dendrogram of merges produced for that subgraph. For each merge $(u,v)$ with parent node id $a$, we check if the merge $(u,v)$ also exists in $\mathcal{D}$. If it is in the dendrogram, we do not need to update anything. If it is not in the dendrogram, we remove the ancestors of $u$ and $v$, and merge them to have parent $a$.
Notice that after each round, $\mathcal{D}$ contains all merges in the previous rounds. 

\begin{algorithm}
\caption{UpdateDendrogram}\label{alg:updatedendrogram}
\begin{algorithmic}[1]
\State \textbf{Input:} \texttt{merges}, $\mathcal{D}$
\State \textbf{Output:} update $\mathcal{D}$
\For{$((u,v) \to a) \in \texttt{merges}$} \Comment{$u,v$ merge to form $a$}
\If{$(u,v) \notin \mathcal{D}$}
\State $ \mathcal{D}$.DeleteAncestors($\{u, v\}$)
\State $\mathcal{D}$.Merge($\{u, v\}$, $a$) \Comment{merge $u,v$ creating $a$}
\EndIf
\EndFor
\end{algorithmic}
\end{algorithm}

\textbf{Updating the minimum merge similarities.} 
We compute the new minimum merge similarities of vertices created by $\texttt{merges}$ and store them in $M$. For space efficiency, we also remove the deleted vertices $V^d_{i}$ and the deleted ancestors in $\mathcal{D}$ in the previous step from $M$, but this is not required for correctness.

\textbf{Update vertex mapping $\texttt{VMap}_i$ and compute vertices to delete from next round.} 
To update $\texttt{VMap}_i$, we only need to update the mappings for $p$ and $V^d_i$, where $p$ is the set of the active vertices in $H_p$ (not contracted). This is because only the mappings of these vertices can possibly change. 

For each deleted vertex $v$, its mapped vertex $\texttt{VMap}(v)$ should be deleted in the next round. We can also remove $v$ from $\texttt{VMap}$. One exception is if $\texttt{VMap}(v)$ is a contracted vertex in $G_p$, then it should not be deleted. This can happen in the following example case. $u,v$ merges to $w$ in round $i$. Originally vertex $u$ maps to $w$, but now vertex $w$ is in round $i$ due to updates, but it maps to itself $w$ in the next round because it does not merge in this round.

For each vertex $v$ in the partition, let its root node in $\mathcal{D}_p$
be $r$. $r$ is also the node that $v$ is contracted to after all merges in this round. So $v$ should be mapped to $r$ in the next round. If its current mapping is not $r$, we update it to map to $r$. In addition, we should remove its current mapping from the next round.

\iffull
We present the pseudocode in Appendix (Alg \ref{alg:updatevmap}).
\fi





\section{Empirical Evaluation}\label{sec:experiments}
We run all experiments on a \texttt{c2-standard-60} Google Cloud instance, consisting of 30 cores (with two-way hyper-threading), with 3.1GHz Intel Xeon Scalable processors and 240GB of main memory.
We made no use of parallelization, i.e., used a single core.

\textbf{Datasets.}
In our experiments, we use the same three datasets as \cite{monath2019scalable}.
\defn{MNIST} contains 28x28 grayscale images of handwritten digits (0-9), embedded to 2 dimensions using UMAP~\cite{SMG2020}. \defn{ALOI}~\cite{aloi} contains 1,000 objects recorded under various imaging circumstances --  108k points in total in 128 dimensions. \defn{ILSVRC\_SMALL} (ILSVRC)~\cite{imagenet} is a 50K subset of the Imagenet ILSVRC12 dataset, embedded into 2048 dimensions using Inception~\cite{inception}. 
 


\textbf{Algorithms.}
We compare \dynhac{} with a static approximate graph HAC and a dynamic hierarchical clustering baseline.

\begin{list}{\textbullet}{%
    \setlength{\leftmargin}{0.5em}
    \setlength{\itemindent}{0em}
    \setlength{\labelwidth}{\itemindent}
    \setlength{\labelsep}{0.5em}
    \setlength{\listparindent}{1em}
    \setlength{\itemsep}{0em}
    \setlength{\parsep}{0em}
    \setlength{\topsep}{0em}
    \setlength{\partopsep}{0em}
}
\item \boldsymbol{\dynhac{}}: Our dynamic approximate graph HAC algorithm implemented in C++ that supports both insertion and deletion. We use \dynhac$_{\epsilon=x}$ to denote \dynhac{} run with parameters $\epsilon=x$. When not
specified, \dynhac{}
denotes running the algorithm with $\epsilon=0.1$. We use threshold $t=0.0001$ for \defn{MNIST} and $t=0.01$ for \defn{ALOI} and \defn{ILSVRC\_SMALL}.

\item \defn{Static HAC}~\cite{parhac}: A static approximate graph HAC algorithm implemented in C++. We run it with $\epsilon=0.1$ and the same thresholds as \dynhac{}.
Since this is a static algorithm, it re-computes the clustering from scratch upon each update.

\item \defn{GRINCH} 
\iffull
~\cite{monath2019scalable, prob_cbr, grinchcode}.
\else
~\cite{monath2019scalable}.
\fi
A dynamic algorithm working directly on points. It is implemented in Python. GRINCH inserts and deletes one point at a time. 
A point can be deleted
from a hierarchy by simply removing the corresponding leaf node. 

\item \defn{GraphGrove} (Grove)
\iffull
~\cite{monath2023online,graphgrove}. 
\else 
~\cite{monath2023online}. 
\fi
Monath et al.~\cite{monath2023online}'s OnlineSCC algorithm on vector data. 
We add a uniformly random noise between $10^{-6}$ and $-10^{-6}$ to each coordinate because it does not support duplicate points. 
We use a maximum of 50 levels with a geometric progression of thresholds from 1 to $10^{-8}$. It supports insertion but not deletion.

\end{list}

\begin{figure}[t]
    \centering
    \begin{subfigure}{\columnwidth}
    \centering
        \includegraphics[width=0.7\textwidth]{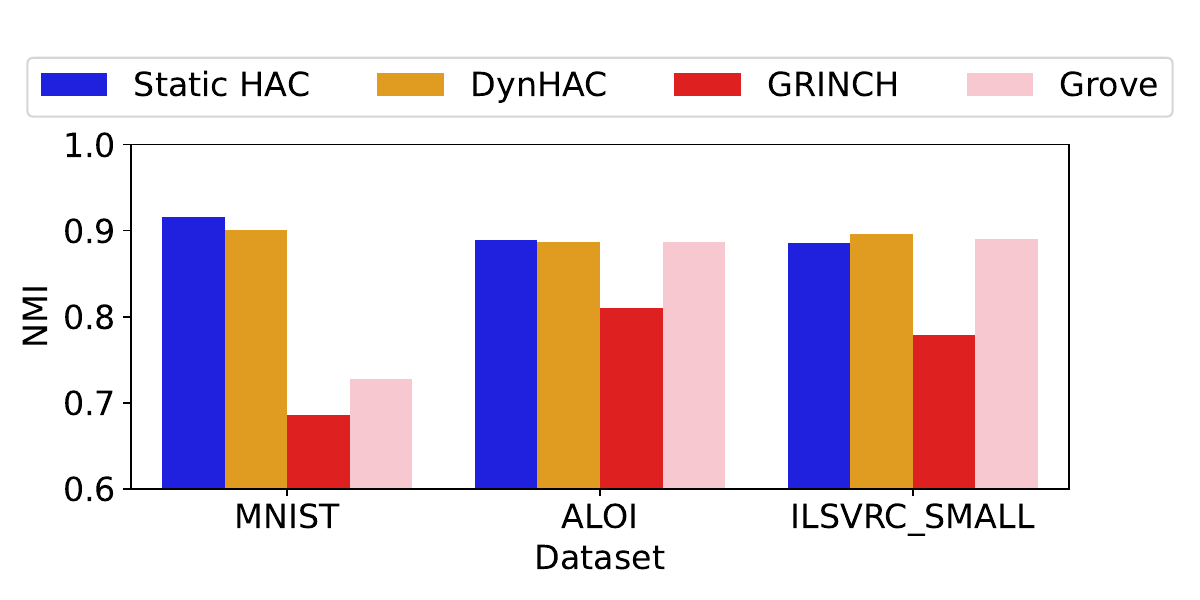}
        \vspace{-0.15cm}
        \caption{\small The NMI of the clustering after all points are inserted. 
        } \label{fig:nmi_bar}
    \end{subfigure}
    
    \begin{subfigure}{\columnwidth}
    \centering
        \includegraphics[width=0.7\textwidth]{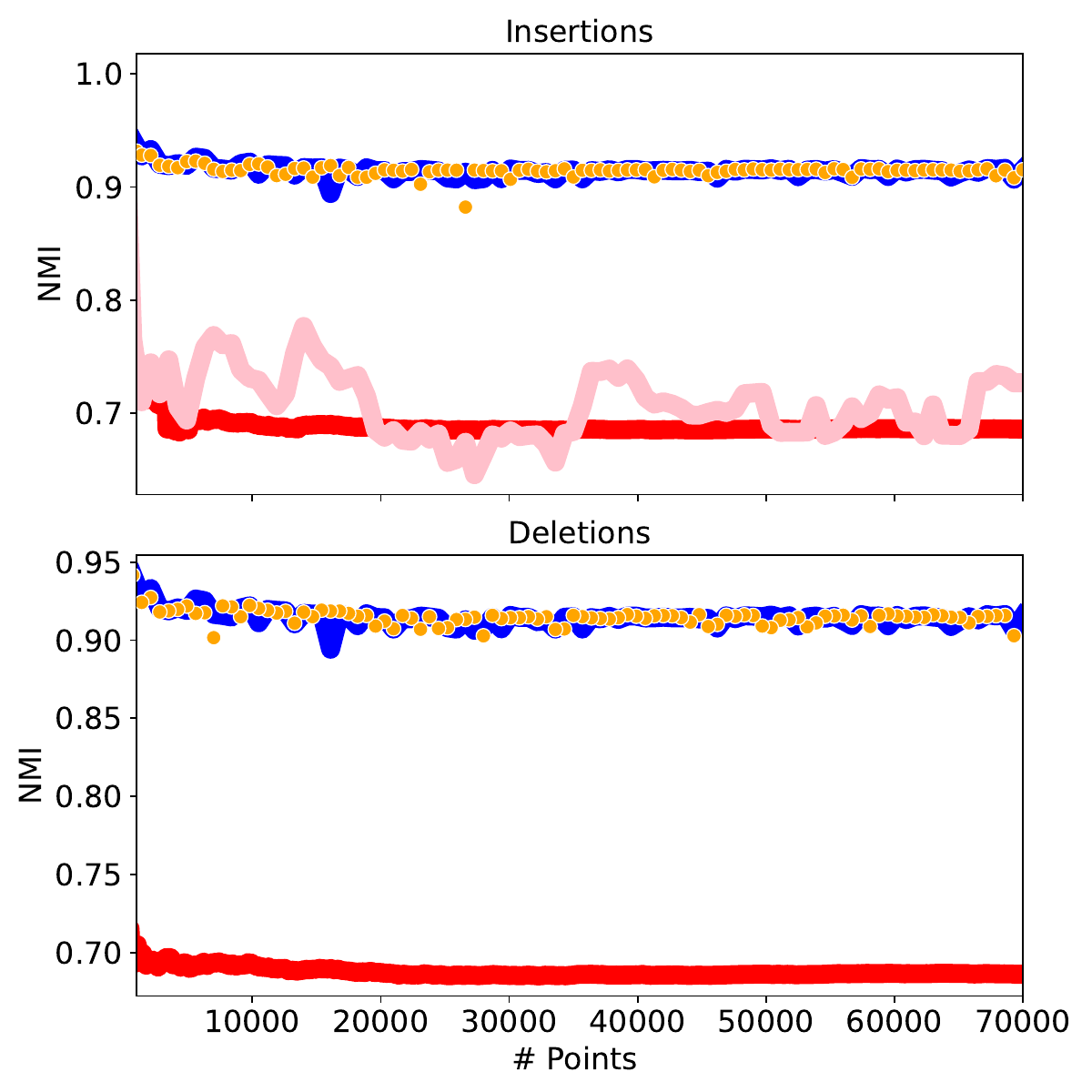}
        \vspace{-0.15cm}
        \caption{\small The NMI after each update on MNIST. $x$-axis is the number of points in the data set after the update.}\label{fig:nmi_mnist}
    \end{subfigure}
        \vspace{-0.55cm}
    \caption{Quality of clustering algorithm.}
    \label{fig:ari}
    \vspace{-3pt}
    
\end{figure}

\begin{figure*}[h]
    \centering
    \begin{subfigure}{\columnwidth}
     \centering
        \includegraphics[width=0.75\textwidth]{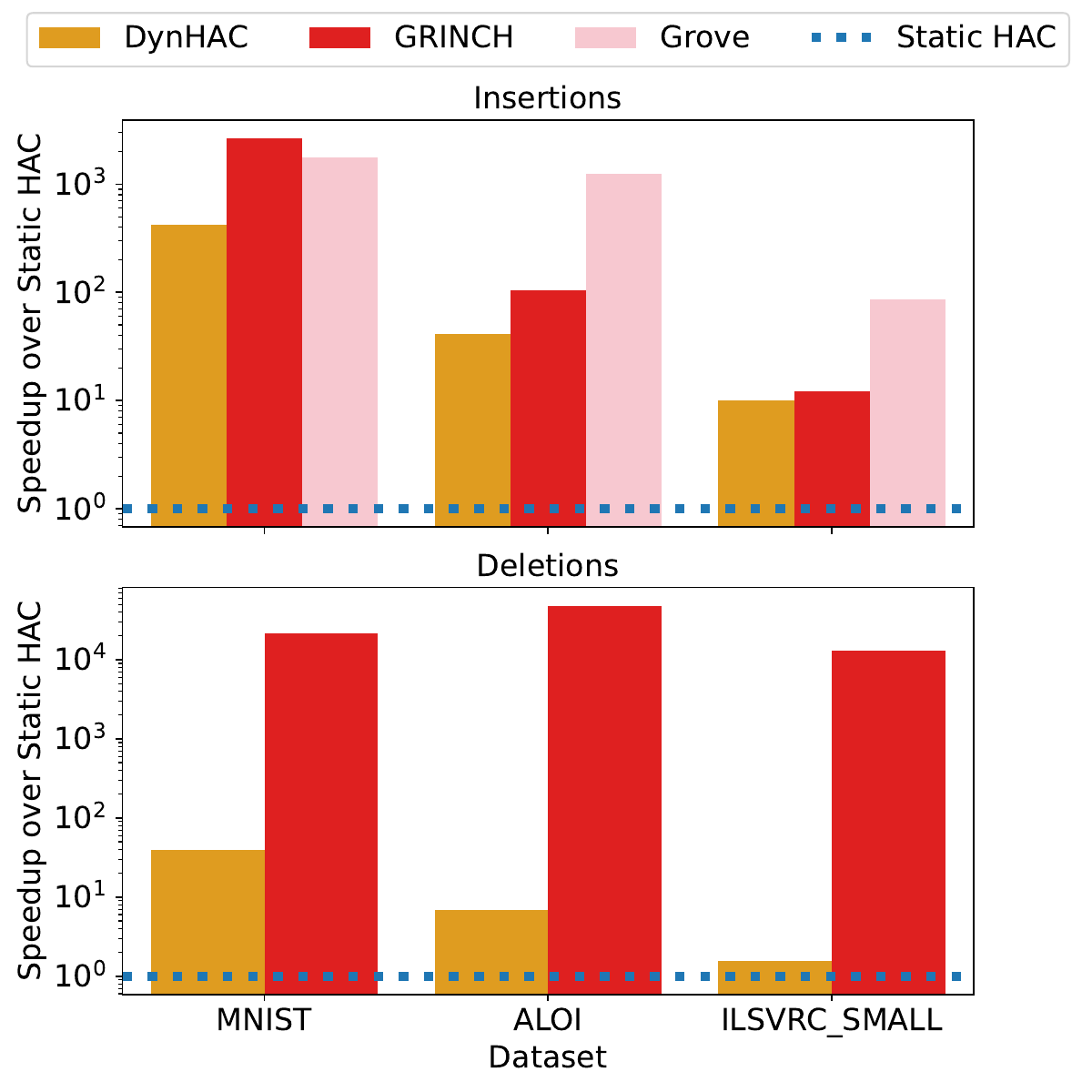}
        \vspace{-0.15cm}
        \caption{\small For \dynhac{} and GRINCH, $y$-axis the averaged running time over the last (insertion) and first (deletion) 100 updates. For the static HAC, $y$-axis the running time of clustering the entire data set.} 
    \end{subfigure}\label{fig:time_bar}
    \begin{subfigure}{\columnwidth}
    \centering
        \includegraphics[width=0.75\textwidth]{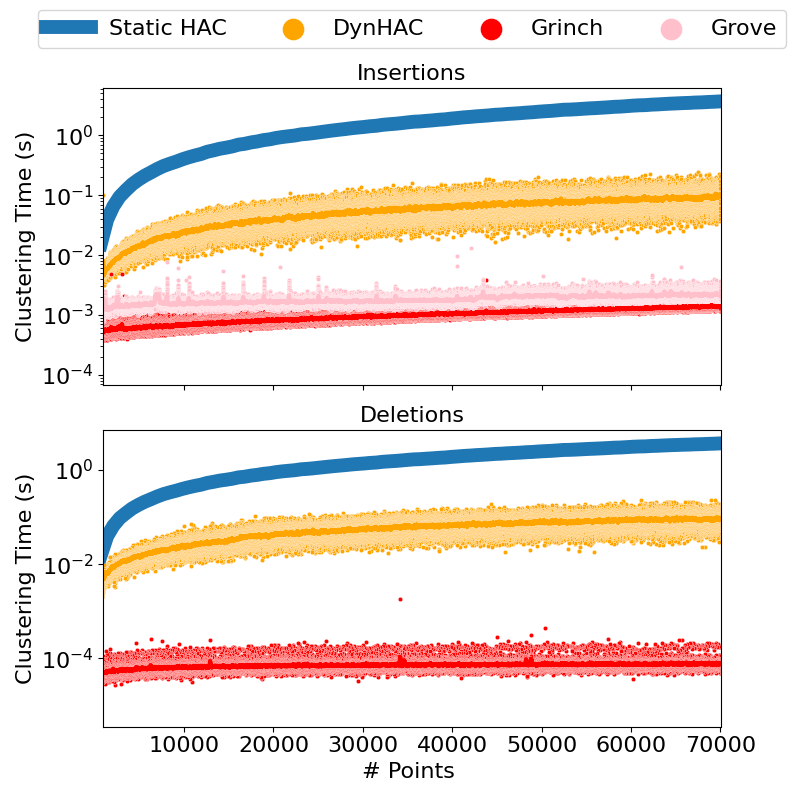}
        \vspace{-0.15cm}
        \caption{\small $x$-axis: the number of points after the update. $y$-axis: time to handle an update (or re-compute from scratch in case of static HAC). The solid line is a running window average of the running time with window size of 100.  } \label{fig:time_mnist}
    \end{subfigure}
        \vspace{-0.25cm}
    \caption{Running times.}
    \label{fig:time}
    \vspace{-17pt}
\end{figure*}

\textbf{Experiment Setup.} 
We get a random permutation of the points $[x_1 \dots x_i \dots x_n]$. We insert points in increasing order of indices $i$, and delete in decreasing order of $i$. We insert/delete one point at time. 

For our insertion experiments with \dynhac{}, we insert a new node into the graph along with edges to its 50 approximately nearest neighbors using the Vamana
\iffull
~\cite{diskann} 
\fi
algorithm from ParlayANN~\cite{dobson2023scaling}. 
We only consider neighbors that are inserted before the point. We first batch insert 99\% of the points, and then one point at a time. We also run an experiment on MNIST where we first insert 1,000 points ($1.42\%$ of all points),
and then one point one at a time.

For our deletion experiments with \dynhac{}, we first batch insert all points, and then remove points one at a time. To batch insert all points, we construct the graph by splitting the points into 100 batches $B_1,\dots, B_{100}$ and for all points in batch $B_i$ we add edges to 50 approximate nearest neighbors in batches $B_1,\dots, B_i$. We use this approach instead of finding the nearest neighbors considering all points to prevent each point  loosing too many neighbors during the deletion sequence, i.e. we ensure that each point has many neighbors that are deleted after it.

For Static HAC, we use the same construction with 100 batches, and run static HAC on the graph.

For GRINCH insertion and deletion, we insert one point at a time, and then delete one point at a time after all points are inserted. GRINCH implementation does not support batch insertion or deletion.
For Grove insertion, we batch insert 99\% of the points and then insert one point at a time.  We also run an experiment on MNIST by first inserting 1,000 points ($1.42\%$ of all points),
and then one point at a time.
We note that Grove does not support deletions.

\textbf{Evaluation.}
We evaluate the clustering quality using the Normalized Mutual Index (NMI).
NMI is MI normalized by the arithmetic mean of the entropy of the two clusterings. 
The NMI score is $1$ for a perfect correlation, and 0 for no mutual information.


For all algorithms except Grove, a flat clustering is extracted from the hierarchical clustering by cutting the dendrogram at a particular cutting threshold.
We try cutting at 40 log-spaced fixed thresholds between $10^{-4}$ and $0$ to find the best NMI with different cutting thresholds.
For Grove, we look at the clustering of all levels, and choose the one with highest NMI.

\vspace{-0.2cm}
\subsection{Comparing with Baselines}
Figures \ref{fig:ari} and \ref{fig:time} depict the NMI and clustering time of the algorithms. 

\textbf{Quality.} 
We show that \dynhac{} maintains a high quality dendrogram.
\Cref{fig:nmi_bar} shows the NMI of the clustering after all  insertions. 
\Cref{fig:nmi_mnist} shows the NMI of the algorithms after each update on MNIST. 


We see that \dynhac{} can get NMI very close to that of static HAC, which aligns with our theoretical approximation guarantee on the dendrogram quality, and matches prior experimental studies of static approximate HAC~\cite{parhac, terahac}. 
On ALOI, in the final dendrogram with all points inserted, the NMI obtained by \dynhac{} is only 0.0014 lower than the one of static HAC. 
For ILSVRC, \dynhac{} achieves even slightly higher NMI than the static HAC. Over all insertions and deletions on MNIST, \dynhac{}'s NMI is at most 0.03 and 0.015 lower than the static HAC, respectively. 

Comparing to \dynhac{}, GRINCH achieves a significantly lower NMI score. Compared to GRINCH, \dynhac{}'s NMI is 0.21 higher on MNIST, 0.08 higher on ALOI, and 0.12 higher on ILSVRC.
Grove achieves good NMI on ALOI and ILSVRC, but much lower NMI on MNIST.
Compared to Grove, \dynhac{}'s NMI 0.18 higher on MNIST, 0.001 higher on ALOI, and 0.007 higher on ILSVRC.

We conclude that \dynhac{} is the only method that we study that can consistently achieve quality close to that of the static HAC baseline.

\textbf{Running time.} 
In Figure~\ref{fig:time}(a) we show the running time of all algorithms.
Our algorithm is slower than GRINCH, which is expected if we compare the running time bounds.
Specifically, GRINCH's running time is $O(Tn + H^2)$ per data point, where $T$ is the time spent on nearest neighbor search, and H is the height of resulting dendrogram~\cite{monath2019scalable}.
On the other hand, in the worst case our algorithm has  $\Theta((m+n)R\log^2n)$ running time, where $R$ is the total number of rounds, i.e., in the case of an update resulting in all clustering merges to change. However, in practice we often do not need to update all merges, so our running time is still faster than the static HAC. Though our algorithm is slower than GRINCH and Grove, our clustering quality is higher as discussed above. 

On MNIST, ALOI, and ILSVRC, we are 423x, 41x, and 10.0x faster than the Static HAC for insertion, respectively. Deletion is 39.7x, 6.9x, and 1.56x faster on MNIST, ALOI, and ILSVRC. On ILSVRC, the total size of the dirty partitions is large, and so we have a smaller speedup.
\iffull
In the Appendix, 
\else
In the full version of our paper, 
\fi
we also show the running time for all insertions and deletion on the last 1\% of all three data sets.

\vspace{-0.2cm}
\section{Conclusion}
We introduce the first fully dynamic HAC algorithm for the popular average-linkage version of the problem, which maintains a $1+\epsilon$ approximate solution.
\dynhac{} handles each update up to 423x faster than recomputing the clustering from scratch. At the same time it achieves up to 0.21 higher NMI score than the state-of-the-art dynamic hierarchical clustering algorithms, which do not provably approximate HAC.

\bibliographystyle{IEEEtran}
\bibliography{references}

\clearpage

\iffull
\appendix
\section{Expanded Related Work}

Several studies have considered the problem of maintaining a hierarchical clustering with average-linkage by re-arranging the hierarchy of clusters in the presence of node insertions \cite{menon2019online, zhang1996birch, monath2023online}. 
These attempts can only process incremental updates. 
BIRCH \cite{zhang1996birch, garg2006pbirch} works for Euclidean spaces; it maintains a tree structure with each node storing statistics on the nodes in its subtree. When a new point arrives, it tracks a root-to-leaf path based on some closeness criteria, and eventually inserts the new point as a leaf in the tree.  
PERCH~\cite{kobren2017hierarchical} and GRINCH~\cite{monath2019scalable} work similarly to each other, where they first identify the leaf representing the nearest neighbor of the newly inserted node, and then track the path to the root of the hierarchy and apply appropriate rotations, as well as "graft" operations which are designed to discover chain-like clustering structures.
OHAC~\cite{menon2019online} processes the insertion of a node by first deleting all nodes in the path from the the nearest neighbor of a newly arrived node to the root decomposing the dendrogram to a forest and then re-runs HAC on these roots of the forest.

While the linkage function that has received the bulk of attention is the average-linkage function, the single-linkage -- the similarity between the clusters is the maximum of the-point-to-point similarities -- has also been studied in the dynamic setting.
In fact, the single-linkage function is closely related to the Minimum Spanning tree problem (see e.g.,~\cite{tseng2022parallel}), whose dynamic version has been extensively studied in the literature~\cite{holm2001poly, kapron2013dynamic, nanongkai2017dynamic}.

Another line of work considered optimization objectives for hierarchical clustering \cite{charikar2019hierarchical, vainstein2021hierarchical}. Some of the studied objectives include the CKMM Revenue \cite{cohen2019hierarchical}, Dasgupta cost~\cite{dasgupta2016cost}, and the MW Revenue \cite{moseley2023approximation}. That line of work is more theory-focused. In particular the best algorithms wrt Dasgupta objective give $O(\sqrt{\log n})$ approximation~\cite{charikar2017approximate, cohen2019hierarchical}, while conditional lower-bounds exclude the existence of constant factor approximate solutions \cite{charikar2017approximate}.
In the case of Moseley-Wang objective, a random clustering (clearly very weak from a practical standpoint) has been shown to give a $1/3$ approximation, while the best approximation algorithm obtains a $0.585$ approximation \cite{alon2020hierarchical}. 
For the CKMM objective the best known algorithm achieves a $0.74$ approximation guarantee \cite{naumov2021objective}.
Interestingly, HAC with average-linkage has been shown to give a $1/3$ approximation for the  MW Revenue objective, and $2/3$ for the CKMM objective, and these bounds are tight \cite{charikar2019hierarchical}.

Finally, \cite{rajagopalan2021hierarchical} uses a hyperplane partitioning method to construct a hierarchical clustering over a stream of updates in a top-down manner, and has the nice property that is agnostic to the order of the updates. The method is only applicable to Euclidean spaces.

In the more general context of dynamic clustering algorithms, multiple formulations of clustering have been studied both for metric spaces, like  facility-location~\cite{bhattacharya2022efficient, cohen2019fully} and variants of $k$-clustering problems~\cite{henzinger2020fully, braverman2017clustering, goranci2021fully, bateni2023optimal, lkacki2024fully, fichtenberger2013bico}, as well as on graphs like $k$-core decomposition~\cite{liu2022parallel}, and correlation clustering~\cite{cohen2022online, behnezhad2023single}.

\section{Pseudocode}

In Algorithm~\ref{alg:dirtypartition} we show our algorithm for identifying dirty partitions based on $\Delta_P$.

\begin{algorithm}
\caption{DirtyPartitions}\label{alg:dirtypartition}
\begin{algorithmic}[1]
\State \textbf{Input:} $\Delta_P:\{v : p \to p' | v \in V_{\text{new and neigh}} \}$, $G$
\State \textbf{Output:} $DP$ set of dirty partitions
\State $DP = \{\}$
\For{$\{v : p \to p'\} \in \Delta_P$} \Comment{$p$ might be $\emptyset$ indicating the vertex did not exist, and $\emptyset \notin G$.}
\State $DP$.Insert($\{p'\}$) \label{alg:dirtypartition:addp'}
\State $DP$.Insert($\{p\}$) if ($p \in G$ and $p$ is not blue)\label{alg:dirtypartition:addp}
\EndFor
\end{algorithmic}
\end{algorithm}

In \cref{alg:updatevmap}, we show the pseudocode for update vertex mapping $\texttt{VMap}$ and compute vertices to delete from next round.

\begin{algorithm}
\caption{UpdateVMap}\label{alg:updatevmap}
\begin{algorithmic}[1]
\State \textbf{Input:} $\mathcal{D}_p$, $p$, $V^d_i$, $\mathcal{D}_{\text{dirty}}$, \texttt{VMap}
\State \textbf{Output:} $V^d_{i+1}$
\State \textbf{Update:} \texttt{VMap}
\State $V_{\text{active}} \gets$ active nodes in $H_p$
\State $V^d_{i+1} \gets \emptyset$
\For{$v \in V_i^d$} 
\If{$\texttt{VMap}(v) \notin V^{\text{contracted}}_{\text{active}}$}
\State  $V^d_{i+1} \gets V^d_{i+1} \cup \texttt{VMap}(v)$ 
\EndIf
\State Remove $v$ from $\texttt{VMap}$
\EndFor

\For{$v \in V_{\text{active}}$} 
\State $r \gets \mathcal{D}_p$.root($v$)
\If{$\texttt{VMap}(v) \neq r$ }
\State  $V^d_{i+1} \gets V^d_{i+1} \cup \texttt{VMap}(v)$ 
\State $\texttt{VMap}(v) \gets r$
\EndIf
\EndFor
\State \textbf{Return} $V^d_{i+1}$
\end{algorithmic}
\end{algorithm}

\section{Analysis of \dynhac{}}
We now outline the analysis of \dynhac{}.
The following lemma shows that all merges within a partition that is {\em not} marked as dirty are still good.

\begin{restatable}{lemma}{turndirty}\label{lemma:turn_dirty}
If a partition $P$ exists and does not become dirty upon node update, all $(1+\epsilon)$-good merges within the partition are still $(1+\epsilon)$-good.
\end{restatable}
\begin{proof}
By Definition~\ref{def:good}, a $(1+\epsilon)$-good merge can stop being $(1+\epsilon)$-good if (1) $\wmax(u)$ increases, (2) $\minmerge(u)$ decreases,  or (3) $u$ is deleted (same for $v$). We show that if a partition does not become dirty, none of the three cases can happen for $u$. The same arguments can be made for $v$.

First, observe that $u$ must be still in $P$, otherwise $P$ will be identified as dirty partition by \cref{alg:dirtypartition:addp}.
We now analyze the three cases above.
(1) If $\wmax(u)$ increases, then $u$ must have a new neighbor with higher edge weight than its current maximum neighbor edge weight. However, by Definition~\ref{def:delta_p}, if $u$ is the neighbor of a new node, it will be included in $\Delta_P$, and $P$ will be identified as dirty partition by \cref{alg:dirtypartition:addp}.
(2) Only nodes with the same merge sequence can have the same node id. So $\minmerge(u)$ cannot change. 
(3) If $u \in P$ is deleted, $P$ would be identified as a dirty partition by \cref{alg:dirtypartition:addp}. Since we would have $u$ changing partition from $P$ to $\emptyset$. So $u$ must still in $P$.
\end{proof}
Using Lemma~\ref{lemma:turn_dirty}, we can show that all merges made by the \dynhac{} algorithm are $(1+\epsilon)$-good, which in turn implies that the algorithm computes a $(1+\epsilon)$-approximate dendrogram, yielding the next theorem.

\begin{restatable}{theorem}{thmapprox}
\dynhac{} maintains a $(1+\epsilon)$-approximate dendrogram upon node insertions and deletions.
\end{restatable}
\begin{proof}
All merges in \dynhac{} are made by \subhac{}. So all merges are $(1+\epsilon)$-good when the merge is made. 
A merge is only untouched after an update if the partition of both nodes in the merge still exists and is not dirty. 
By Lemma~\ref{lemma:turn_dirty}, good merges stay good if its partition is not dirty. 
If a partition does not exist anymore, its red node must have been deleted. 
So all its neighbors (which is all nodes in the removed partition) must belong to a dirty partition, and re-merged by \subhac{}. 
So all merges in \dynhac{} are $(1+\epsilon)$-good after updates.
By Lemma~\ref{lemma:dendrogram}, any dendrogram produced by a sequence of $(1+\epsilon)$-good merges is $(1+\epsilon)$ approximate. So \dynhac{} maintains a $(1+\epsilon)$-approximate dendrogram upon node insertions and deletions.
\end{proof}

Finally, the following two theorems bound the amount of work that our algorithms perform (1) during a single round, and (2) to initialize the data structure given an initial input.

\begin{restatable}{theorem}{thmtotaldirty}
The total size of dirty partitions in a round can be bounded by the size of the 4-hop neighborhood of all inserted and deleted nodes. 
\end{restatable}
\begin{proof}
Consider an inserted/deleted node $x$.
This update may cause the neighbor of $x$ to change its partition.
Assume that it partition id becomes $r \neq x$ (or was $r \neq x$ prior to an insertion).
The partition subgraph containing $r$ can contain at most the 2-hop neighbors of $r$ (including the inactive nodes). $r$ is $x$'s 2-hop neighbor. So in total, $x$ can make at most its 4-hop neighbor dirty.
\end{proof}

\begin{restatable}{theorem}{thminit}
Inserting $n$ nodes with $m$ edges into an empty graph using \cref{alg:dynhac} takes $O(R(m + n) \log^2 n)$, where $R$ is the number of rounds. The space complexity is $O(Rm)$.
\end{restatable}
\begin{proof}
The bottleneck of \cref{alg:dynhac} is \subhac{}. The running time of \subhac{} \shepchange{on a graph containing $n$ vertices and $m$ edges} is $O((m + n) \log^2 n)$~\cite{terahac}.
\end{proof}

\section{Deferred Proofs}\label{apx:proof}

\changepartition*
\begin{proof}
A vertex can have different partition id in two cases: 1) it does not exist in $G$, or 2) it has a different partition id. The set of vertices satisfying case $1$ is $V$.

Now we look at case 2. A vertex can only change the partition id if its neighborhood changes, so only the neighbors of $V$ and $V^d$ can change their ids. Each such vertex is contained either in $(N_{G}(V^d) \setminus V^d)$ (neighbors of deleted vertices that are themselves not deleted), or $N_{G'}(V)$. 
\end{proof}

\section{Performance Analysis}

\textbf{Varying $\epsilon$}
Figure~\ref{fig:epsilon_bar} shows the average running time and NMI, using different $\epsilon$ values ($[0, 0.1, 1]$). Note that $\epsilon=0$ is the same as exact HAC. Here we show the result for insertions, but the result for deletion is similar\iffull
(see Appendix). 
\else 
. We present the result in the full version of our paper.
\fi
We observe that higher $\epsilon$ results in faster running times, with only a little degradation in clustering quality. Specifically, in insertion \dynhac$_{\epsilon=0.1}$ is up to 1.93x and \dynhac$_{\epsilon=1}$ is up to 2.6x times faster than \dynhac$_{\epsilon=0}$. In deletion \dynhac$_{\epsilon=0.1}$ is up to 2.6x and \dynhac$_{\epsilon=1}$ is up to 4.22x times faster. 

\begin{figure}[t]
    \centering
    \includegraphics[trim={0 10.2cm 0 0},clip, width=0.8\columnwidth]{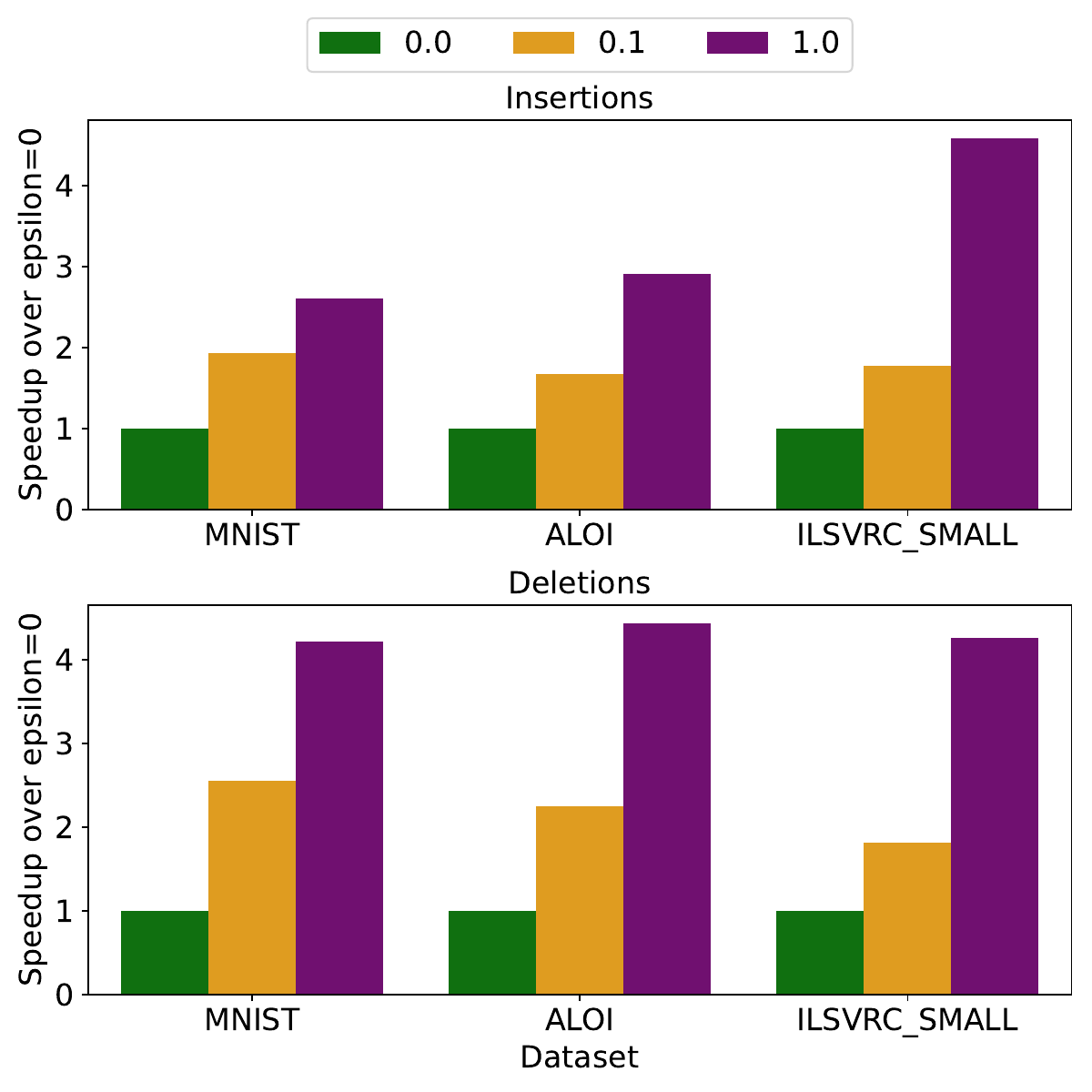}
    \includegraphics[trim={0 10.2cm 0 0},clip, width=0.8\columnwidth]{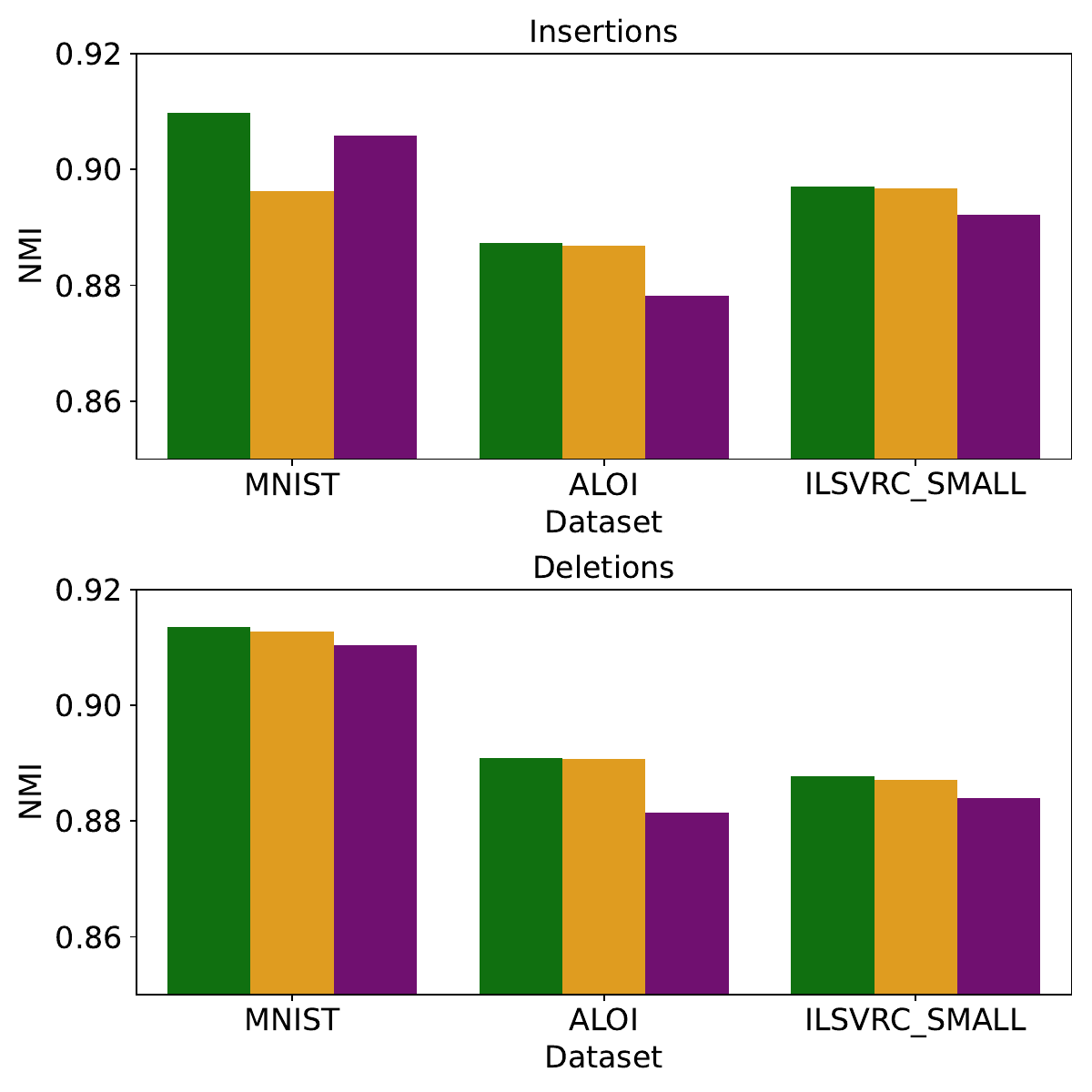}
    \caption{\small Update speedup over $\epsilon=0$ and NMI of the last 1\% insertions on data sets with different $\epsilon$ values. Deletions are similar. 
    }
    \label{fig:epsilon_bar}
    \vspace{-10pt}
\end{figure}

\textbf{Number of edges in dirty partition.} In \Cref{fig:num_edges}, we plot the number of edges in dirty partitions across all rounds against the update time. We see that the update time increases with the number of dirty edges, which aligns with our analysis that the bottle neck of the algorithm is \subhac{}, whose running time is  $O((m + n) \log^2 n)$~\cite{terahac}.

\textbf{Number of rounds.} 
In \cref{fig:num_rounds}, we plot the number of points inserted already and the number of rounds taken for an insertion update. 
We see that the number of rounds increases logarithmically with the of the number of points.

\begin{figure}[htbp]
    \centering
    \begin{subfigure}[b]{0.49\columnwidth}
        \centering
        \includegraphics[width=\linewidth, valign=c]{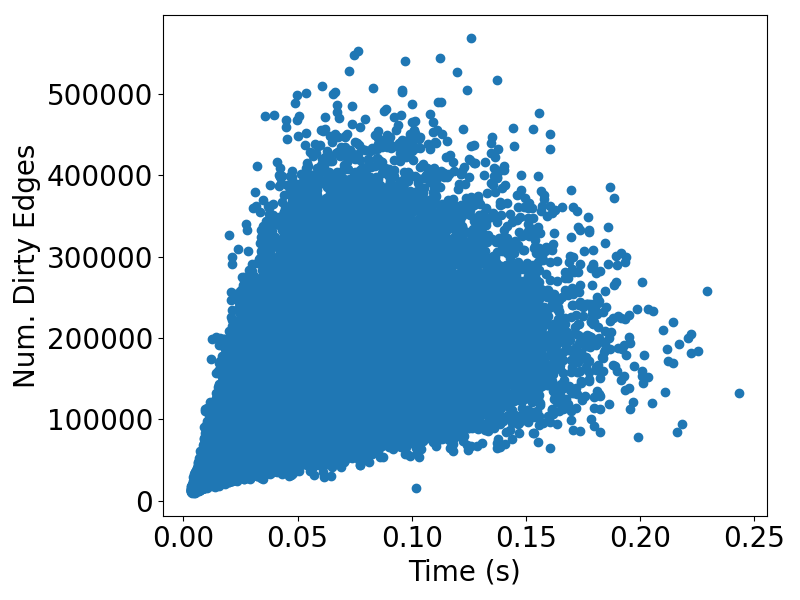}
        \caption{Total edges in all dirty partitions vs. update time.}
        \label{fig:num_edges}
    \end{subfigure}
    \begin{subfigure}[b]{0.49\columnwidth}
        \centering
        {\includegraphics[width=\linewidth, valign=c]{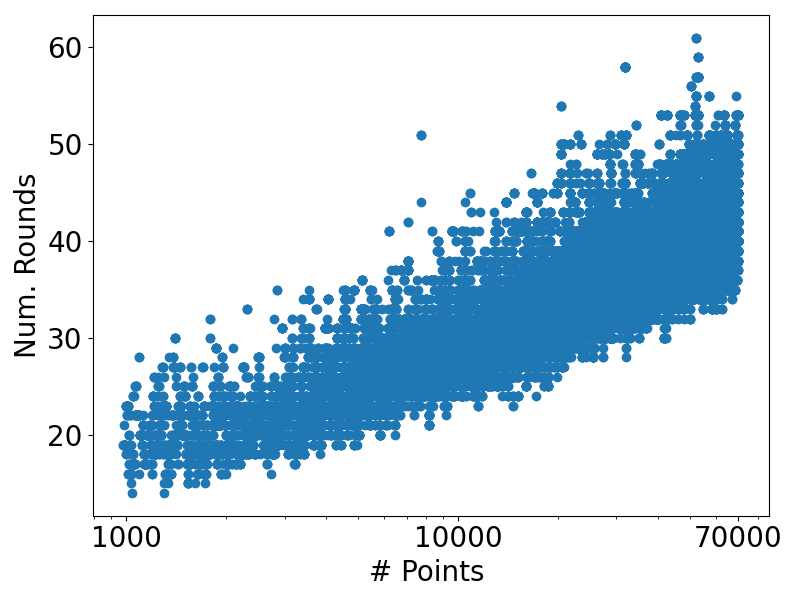}}
        \caption{The number of rounds vs. number of points inserted.}
        \label{fig:num_rounds}
    \end{subfigure}
    \caption{Analysis of \dynhac{} on MNIST.}
\end{figure}

\section{Additional plots and tables}

\begin{table}\footnotesize
\centering
\caption{\small Datasets, the number of data points $(n)$, the dimension $(d)$, and the number of ground truth clusters.}

\begin{tabular}[!t]{lrrr}   
\toprule
{Graph Dataset} & Num. Points & Dim. & Num. Clusters\\
\midrule
{\emph{MNIST    }}     & 70,000       & 2    & 10  \\
{\emph{ALOI    }  }     & 108,000       & 128    & 1000  \\
{\emph{ILSVRC   }}     &   50,000     & 2048    & 1000  \\
\vspace{-1cm}
\end{tabular}
\label{table:data}
\end{table}

\begin{figure*}
    \centering
    \includegraphics[width = 0.8 \columnwidth]{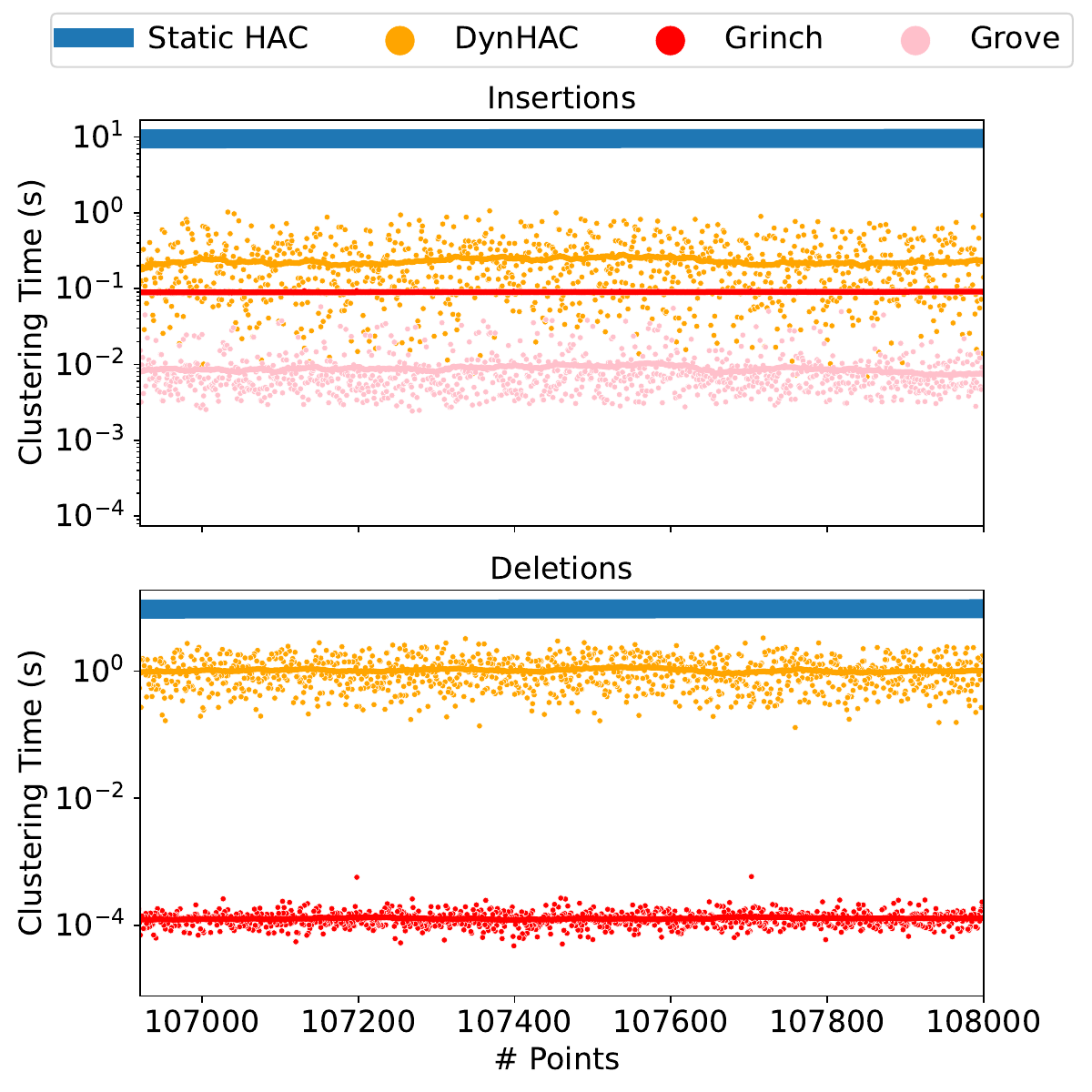}
    \includegraphics[width = 0.8 \columnwidth]{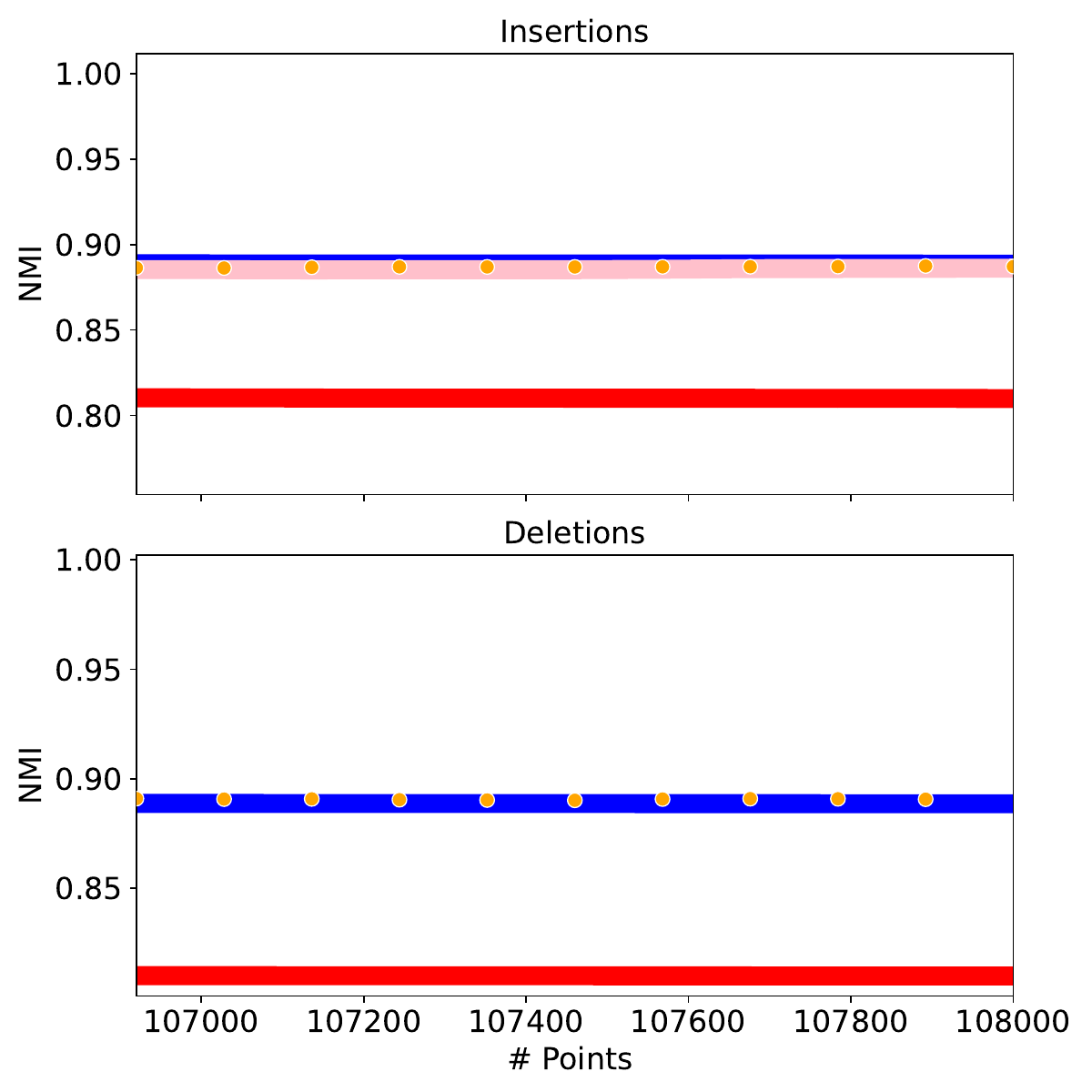}
    \caption{Running time and quality on ALOI for static HAC and our \dynhac{} insertion and deletion, and GINRCH insertion and deletion.}
    \label{fig:aloi}
\end{figure*}

\begin{figure*}
    \centering
    \includegraphics[width = 0.8 \columnwidth]{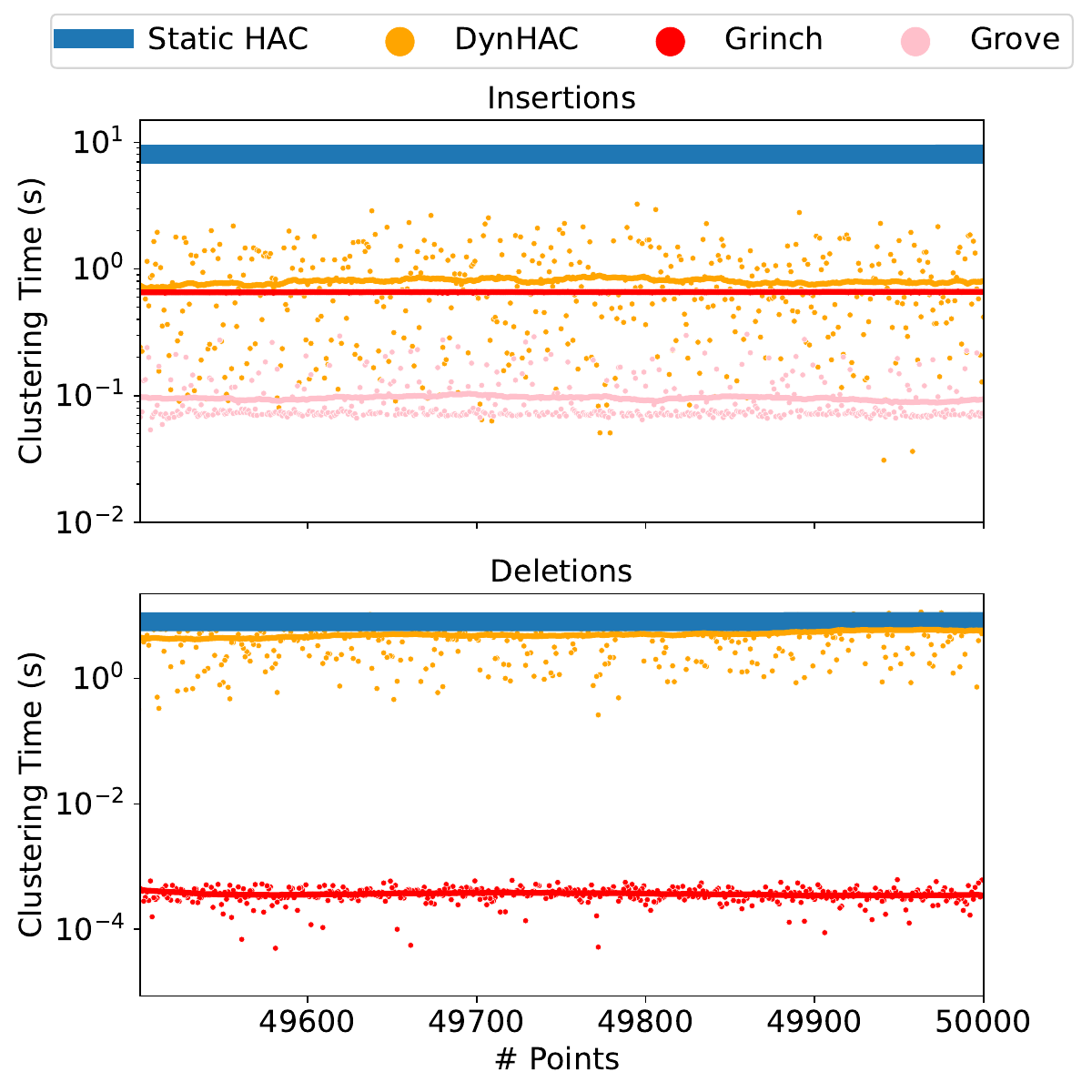}
    \includegraphics[width=0.8\columnwidth]{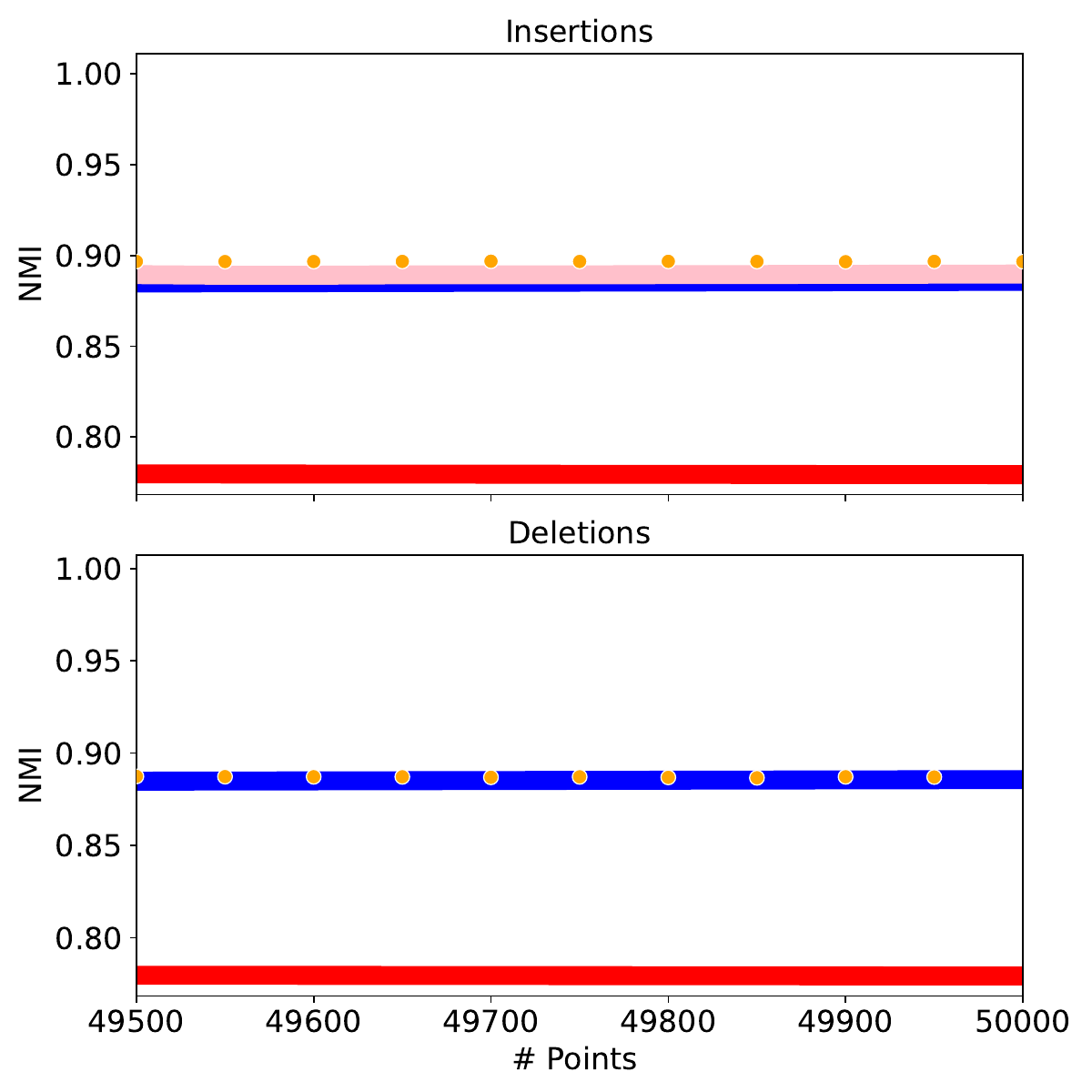}
    \caption{Running time and quality on ILSVRC for static HAC and our \dynhac{} insertion and deletion, and GINRCH insertion and deletion.}
    \label{fig:ilsvrc}
\end{figure*}

\begin{table*}[t]
\centering
\begin{tabular}{llrrr}
  \toprule
   & Algorithm & Clustering & NMI & Speedup \\
  Dataset &  &  &  &  \\
  \midrule
  MNIST & Static HAC & 3.687000 & 0.915277 & 1.000000 \\
  MNIST & DynHAC & 0.008703 & 0.900133 & 423.631747 \\
  MNIST & GRINCH & 0.001392 & 0.686119 & 2648.991842 \\
  MNIST & Grove & 0.002099 & 0.727114 & 1756.245833 \\
  \hline
  ALOI & Static HAC & 9.546000 & 0.888592 & 1.000000 \\
  ALOI & DynHAC & 0.230439 & 0.887153 & 41.425311 \\
  ALOI & GRINCH & 0.090949 & 0.809878 & 104.959618 \\
  ALOI & Grove & 0.007629 & 0.886217 & 1251.328324 \\
   \hline
  ILSVRC\_SMALL & Static HAC & 8.057000 & 0.885706 & 1.000000 \\
  ILSVRC\_SMALL & DynHAC & 0.804534 & 0.896586 & 10.014496 \\
  ILSVRC\_SMALL & GRINCH & 0.657990 & 0.779309 & 12.244861 \\
  ILSVRC\_SMALL & Grove & 0.093344 & 0.889749 & 86.315415 \\
  \bottomrule
  \end{tabular}

  \begin{tabular}{llrrr}
  \toprule
   & Algorithm & Clustering & NMI & Speedup \\
  Dataset &  &  &  &  \\
  \midrule
  MNIST & Static HAC & 3.687000 & 0.915277 & 1.000000 \\
  MNIST & DynHAC & 0.092865 & 0.900133 & 39.702794 \\
  MNIST & GRINCH & 0.000171 & 0.686119 & 21508.204239 \\
   \hline
  ALOI & Static HAC & 9.546000 & 0.888592 & 1.000000 \\
  ALOI & DynHAC & 1.382100 & 0.887153 & 6.906881 \\
  ALOI & GRINCH & 0.000200 & 0.809878 & 47722.081030 \\
   \hline
  ILSVRC\_SMALL & Static HAC & 8.057000 & 0.885706 & 1.000000 \\
  ILSVRC\_SMALL & DynHAC & 5.157170 & 0.896586 & 1.562291 \\
  ILSVRC\_SMALL & GRINCH & 0.000612 & 0.779309 & 13164.591869 \\
  \bottomrule
  \end{tabular}
    \caption{Clustering time averaged over the last (insertion) and first (deletion) 100 updates. NMI after the last insertion.}
    \label{tab:summary}
\end{table*}

\begin{figure*}
    \centering
    \includegraphics[width=0.7\columnwidth]{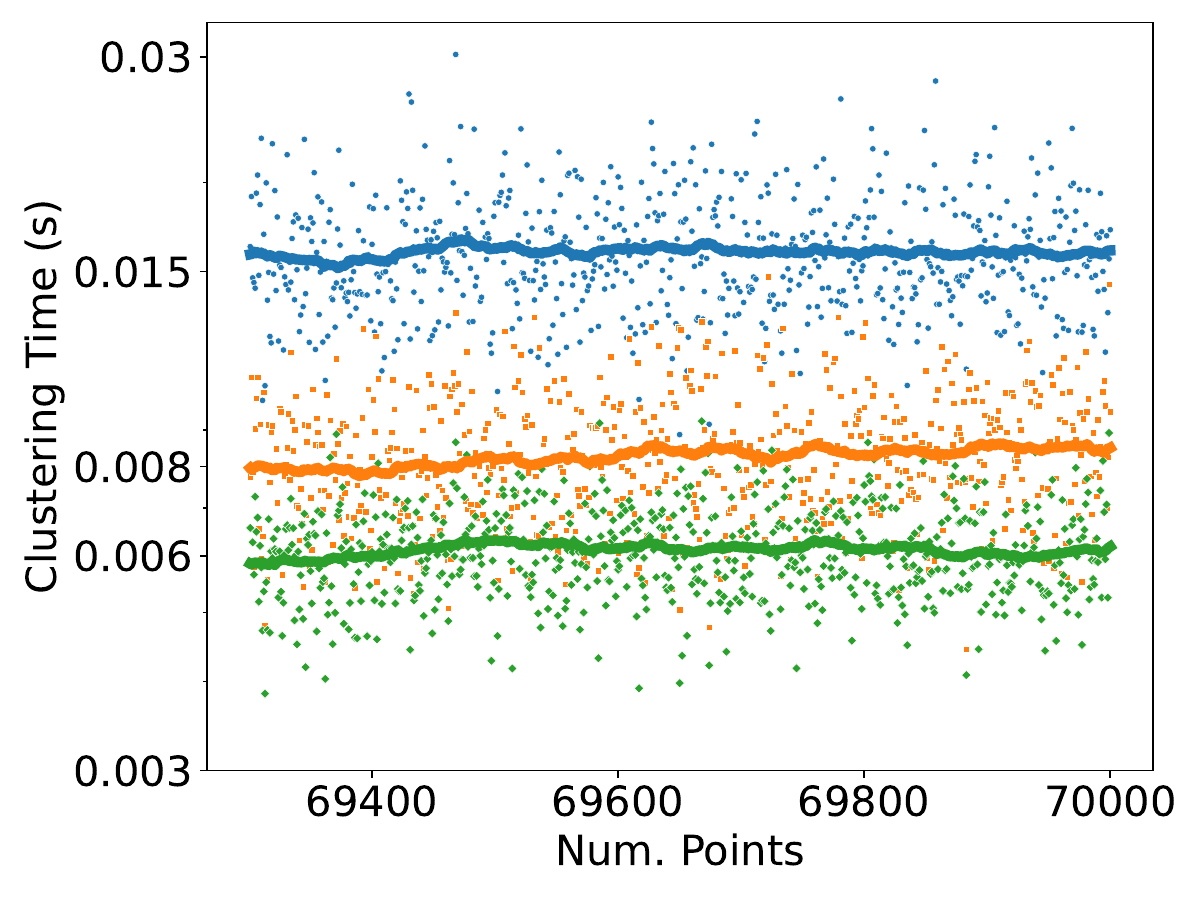}
    \includegraphics[width=0.7\columnwidth]{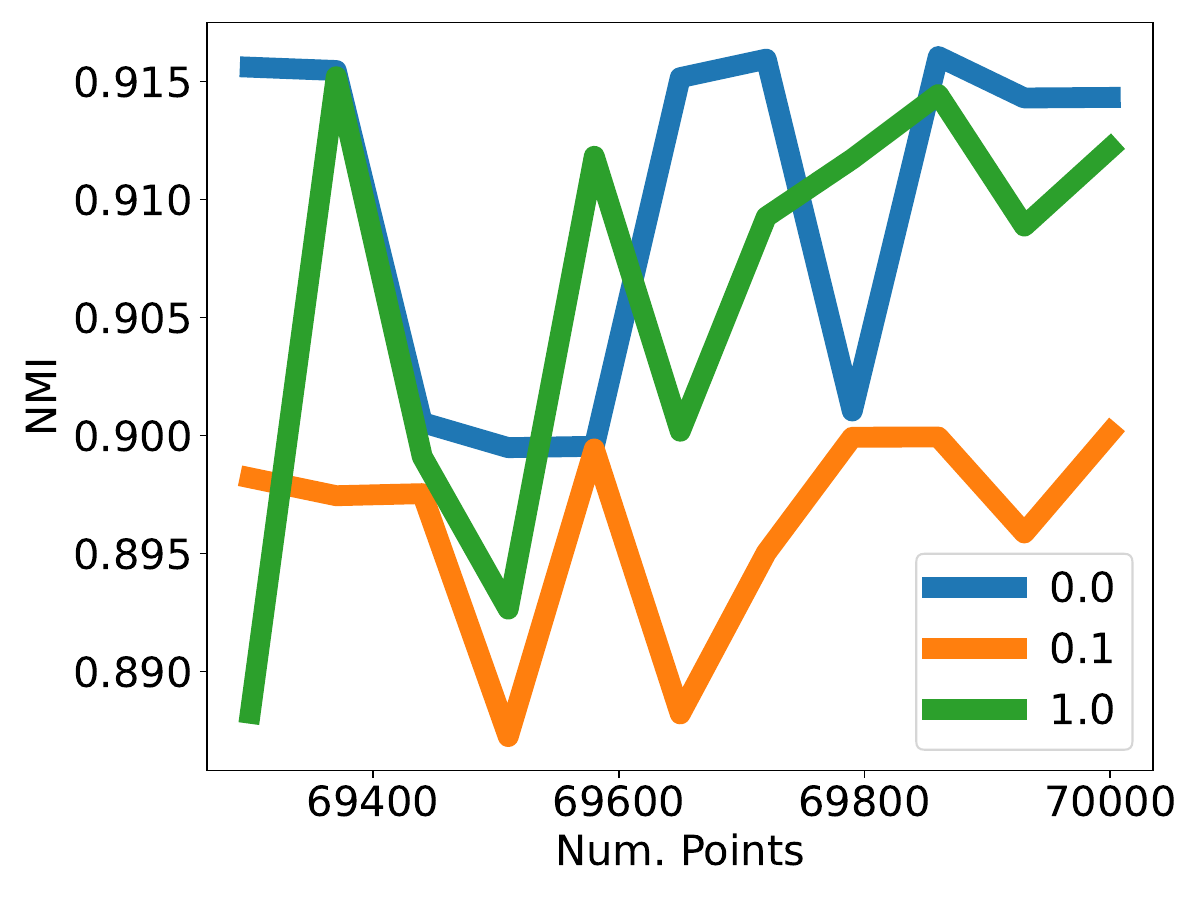}
    \caption{\dynhac{} Insertion with different $\epsilon$ values on MNIST.}
    \label{fig:mnist_ins_epsilon}
\end{figure*}

\begin{figure*}
    \centering
    \includegraphics[width=0.4 \textwidth]{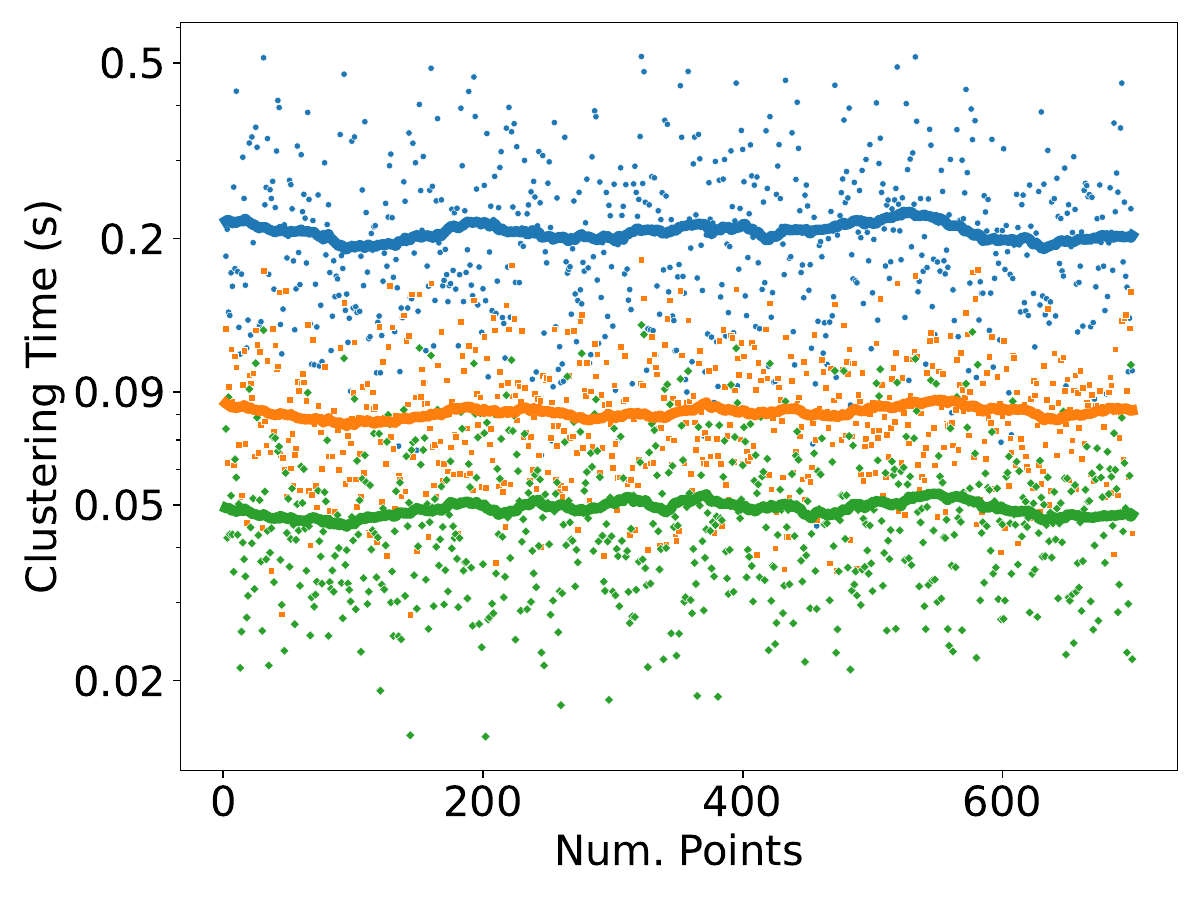}
    \includegraphics[width=0.4 \textwidth]{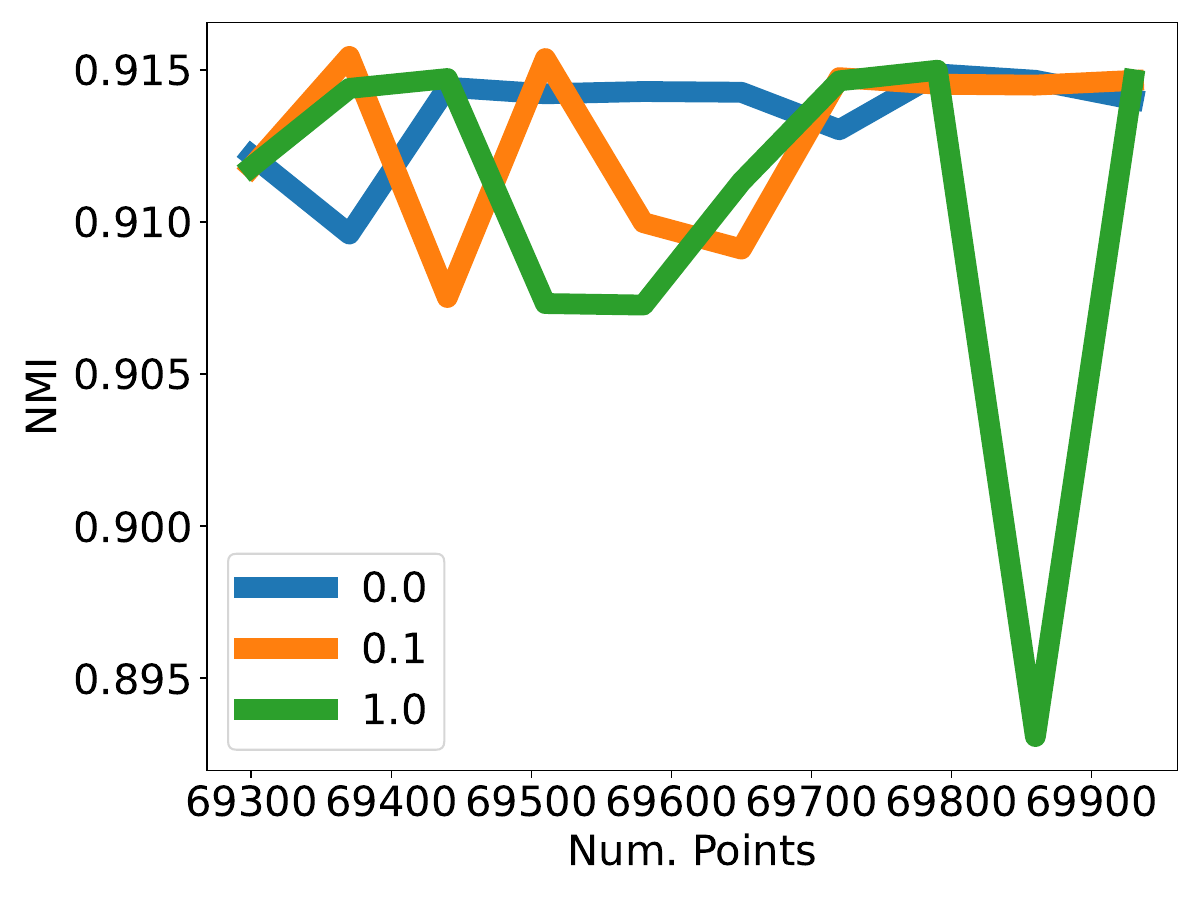}
    \caption{\dynhac{} Deletion with different $\epsilon$ values on MNIST.}
    \label{fig:mnist_del_epsilon}
\end{figure*}

\begin{figure*}
    \centering
    \includegraphics[width=0.4 \textwidth]{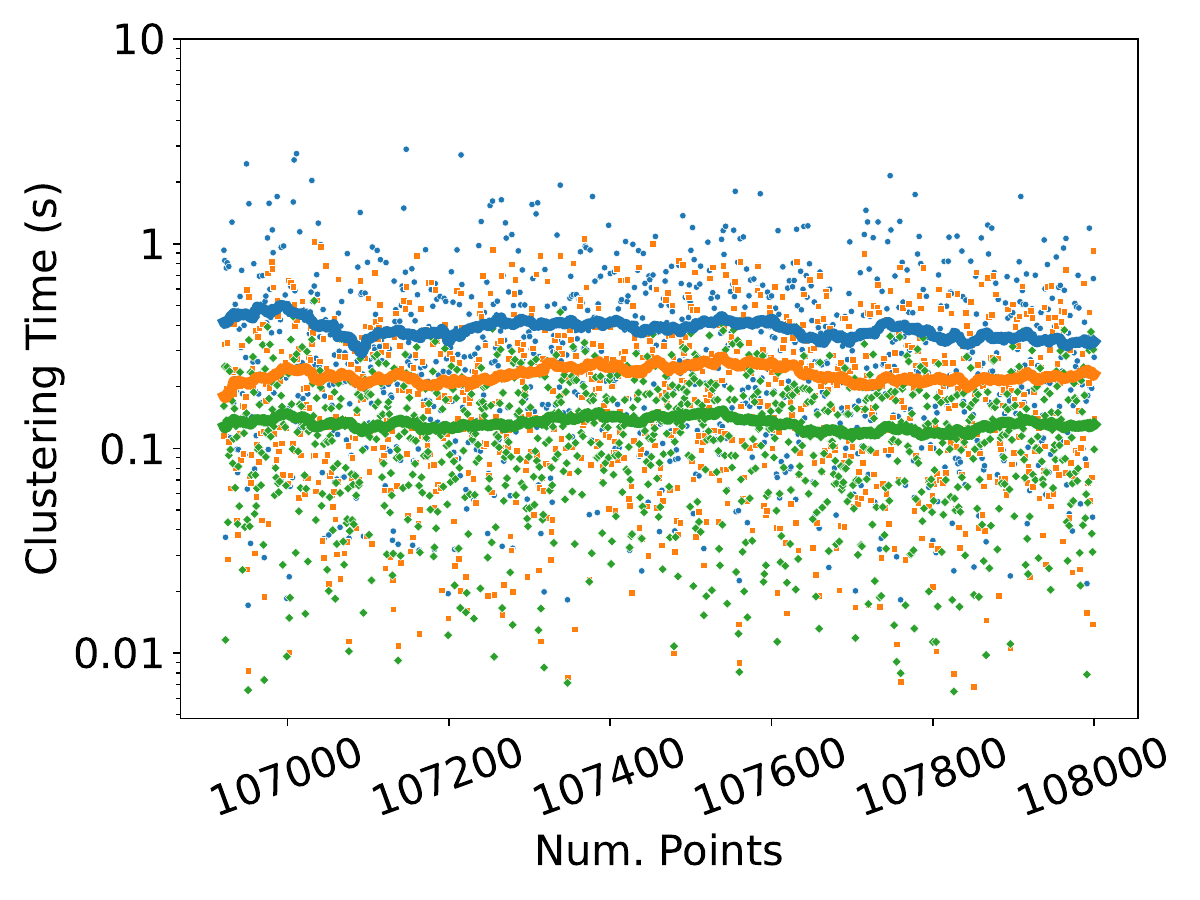}
    \includegraphics[width=0.4 \textwidth]{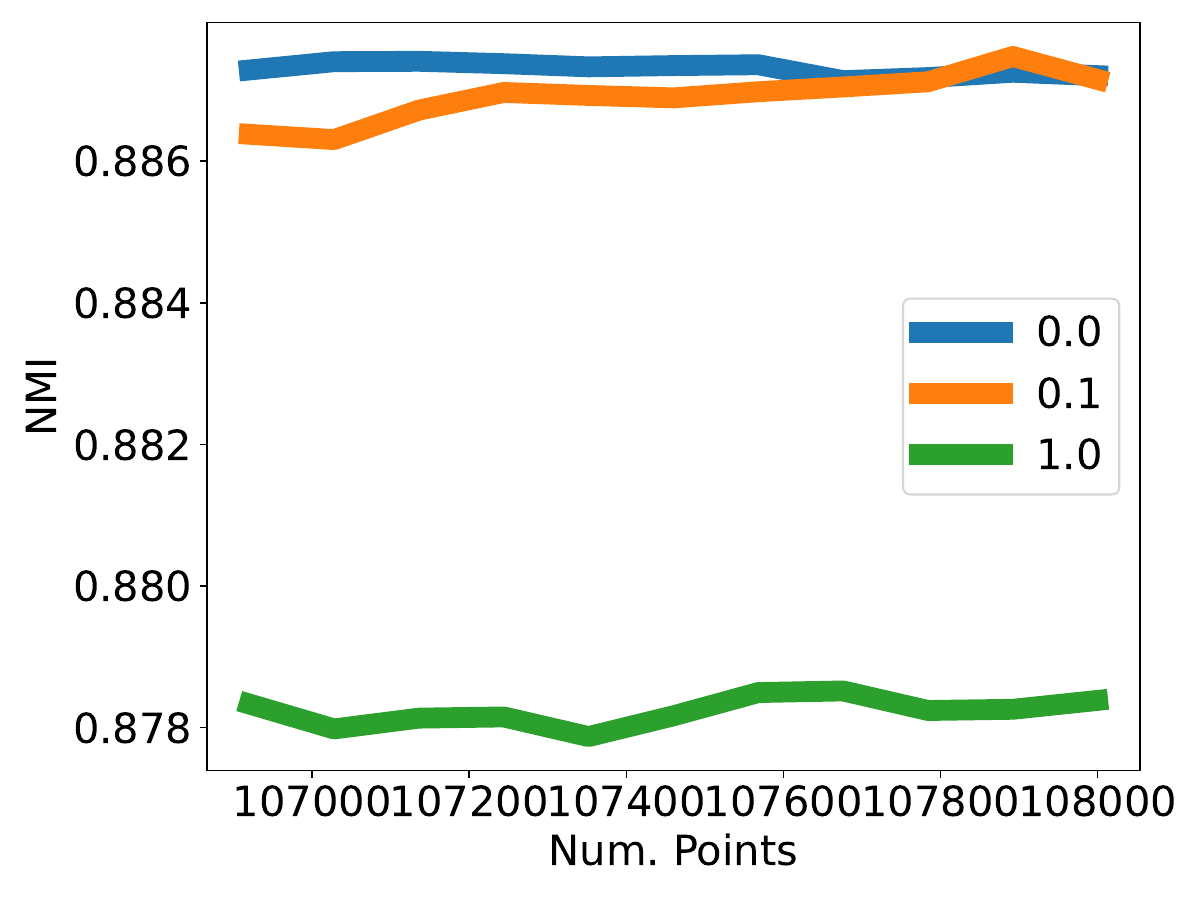}
    \caption{\dynhac{} Insertion with different $\epsilon$ values on ALOI.}
    \label{fig:aloi_epsilon}
\end{figure*}

\begin{figure*}
    \centering
    \includegraphics[width=0.4 \textwidth]{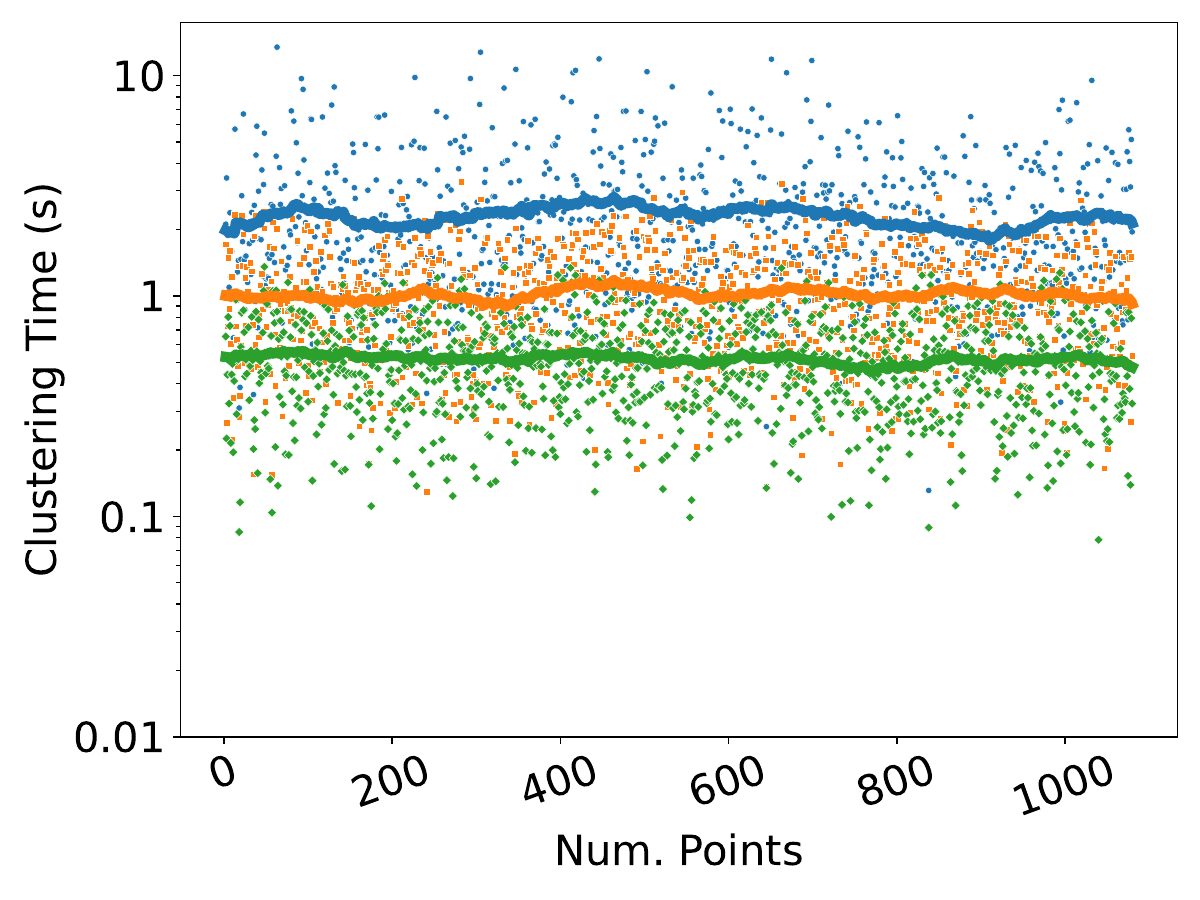}
    \includegraphics[width=0.4 \textwidth]{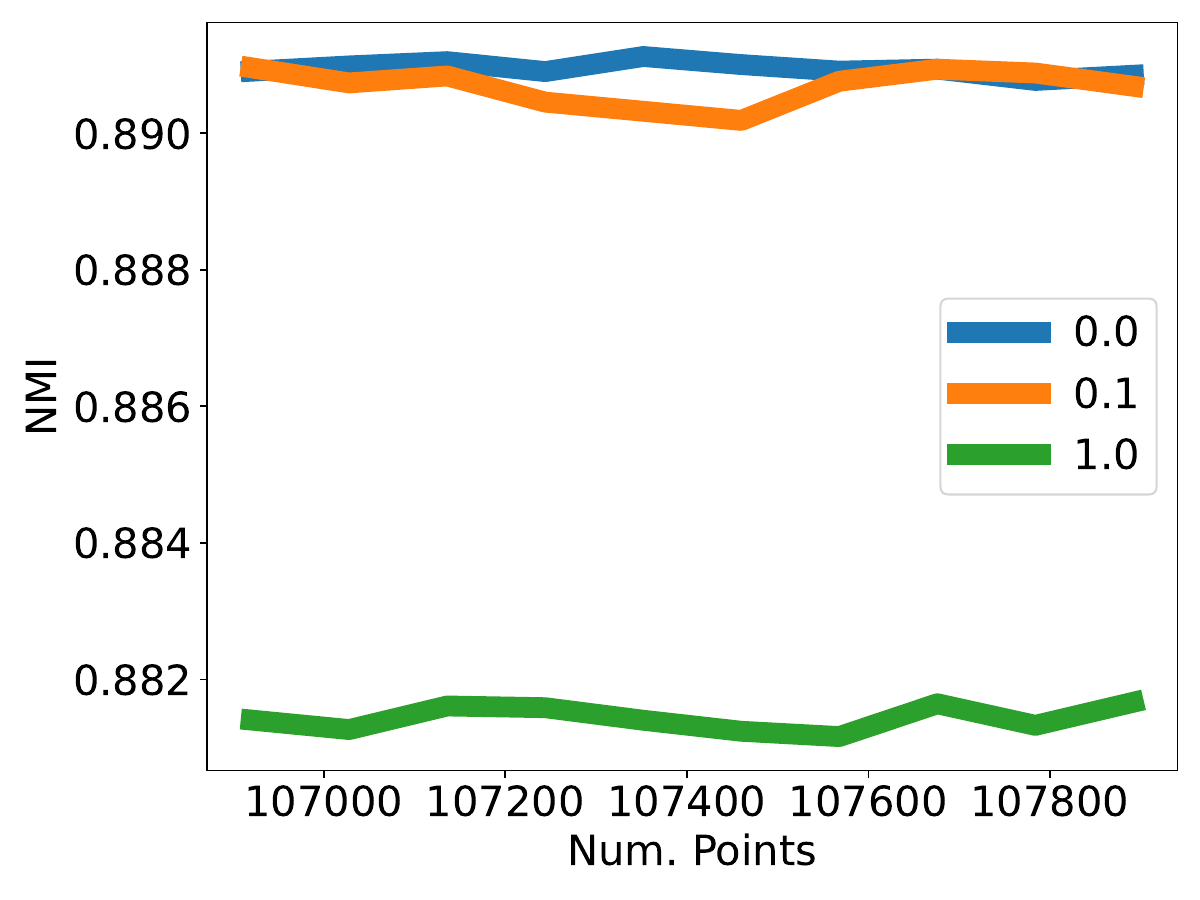}
    \caption{\dynhac{} Deletion with different $\epsilon$ values on ALOI.}
    \label{fig:aloi_epsilon}
\end{figure*}

\begin{figure*}
    \centering
    \includegraphics[width=0.4 \textwidth]{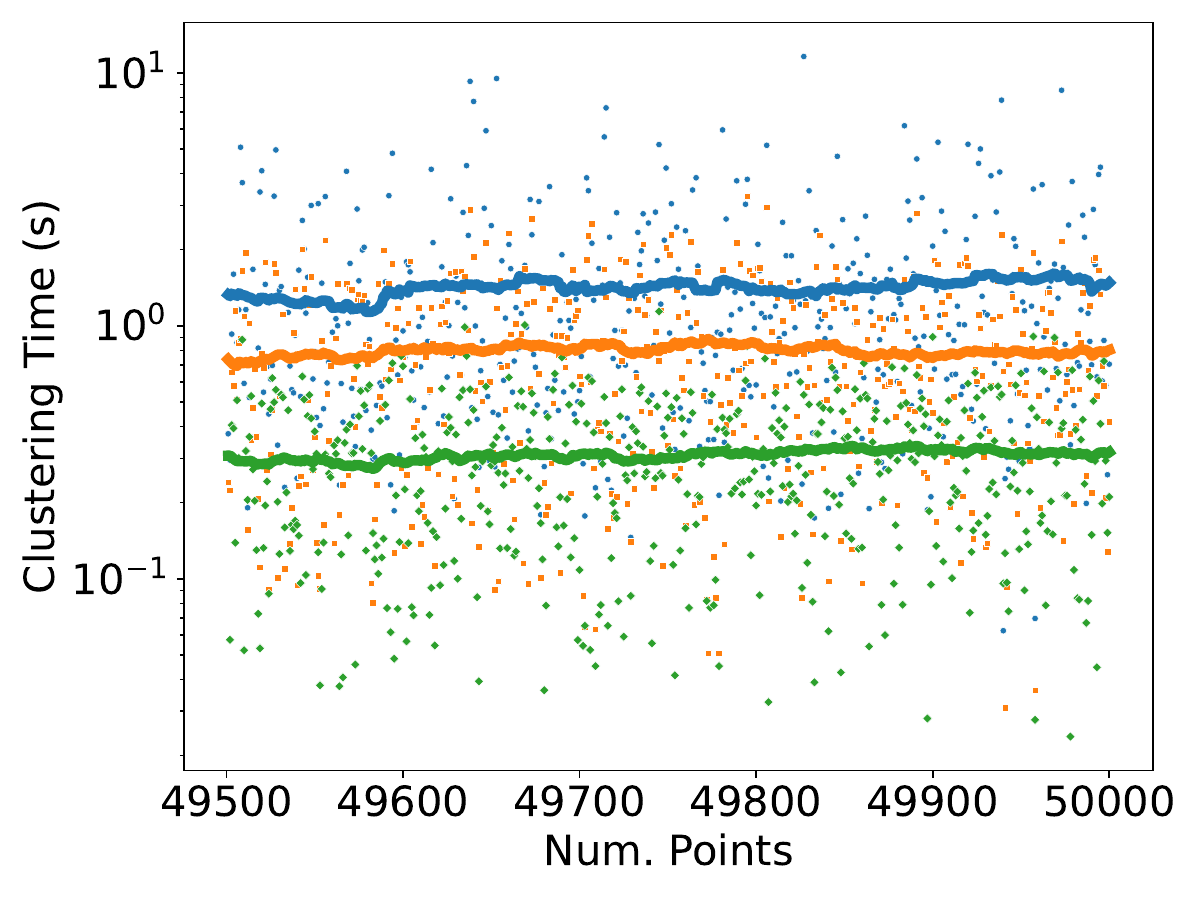}
    \includegraphics[width=0.4 \textwidth]{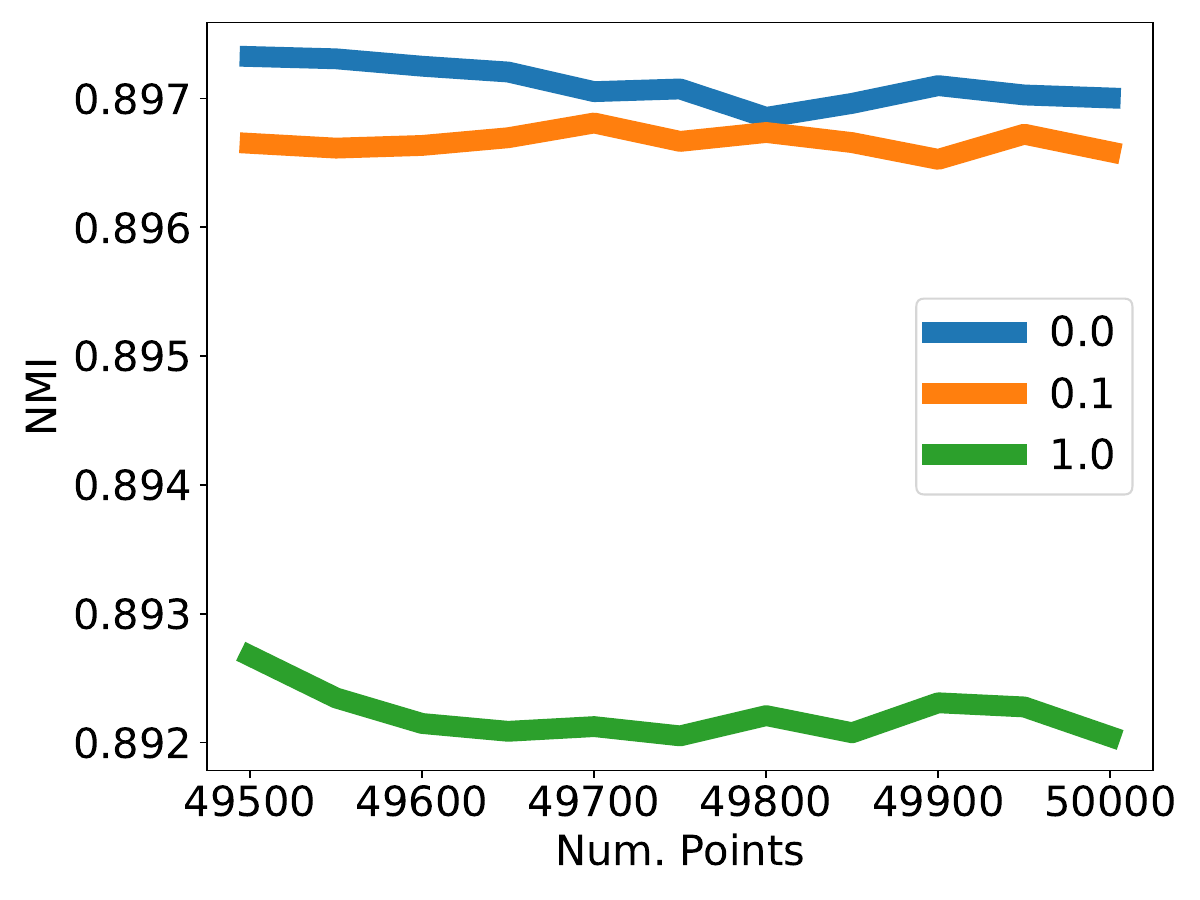}
    \caption{\dynhac{} Insertion with different $\epsilon$ values on ILSVRC.}
    \label{fig:ilvrc_ins_epsilon}
\end{figure*}

\begin{figure*}
    \centering
    \includegraphics[width=0.4 \textwidth]{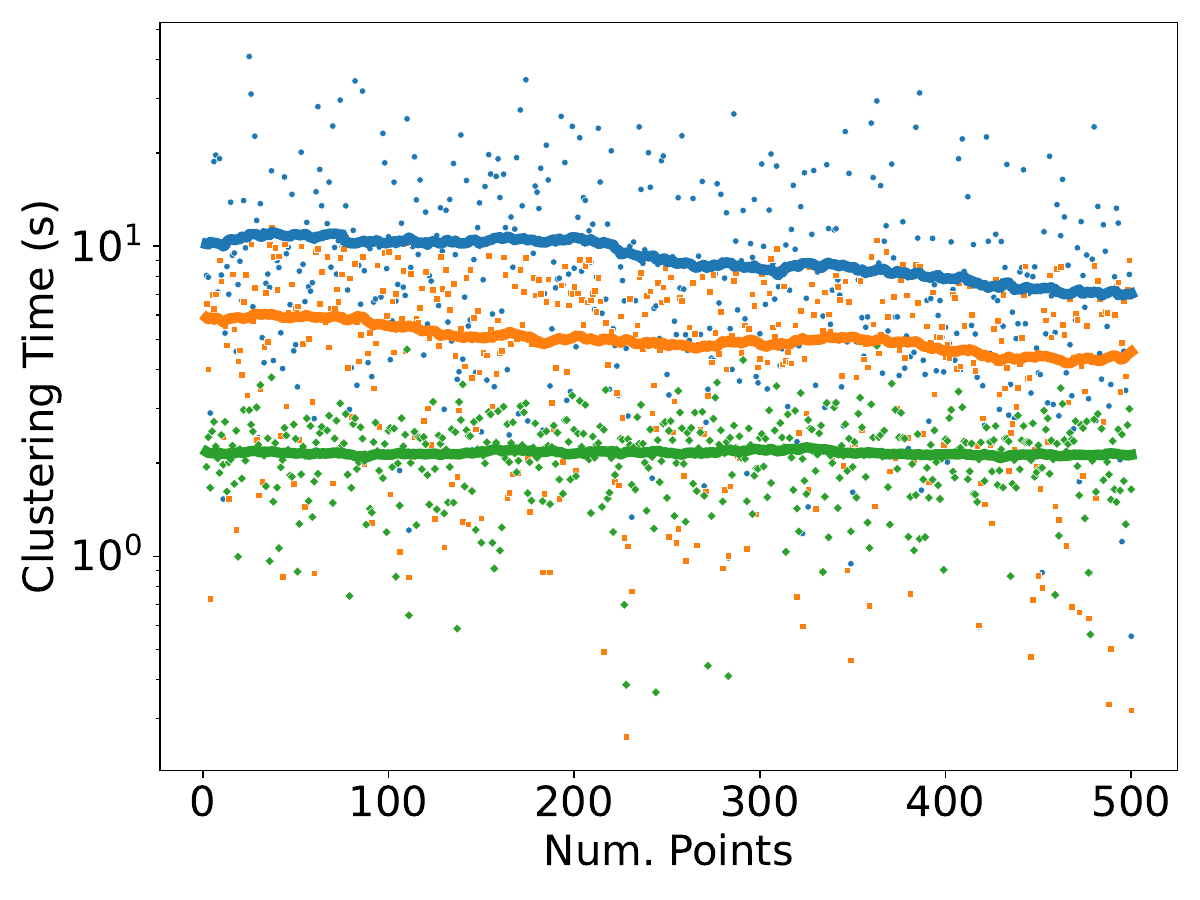}
    \includegraphics[width=0.4 \textwidth]{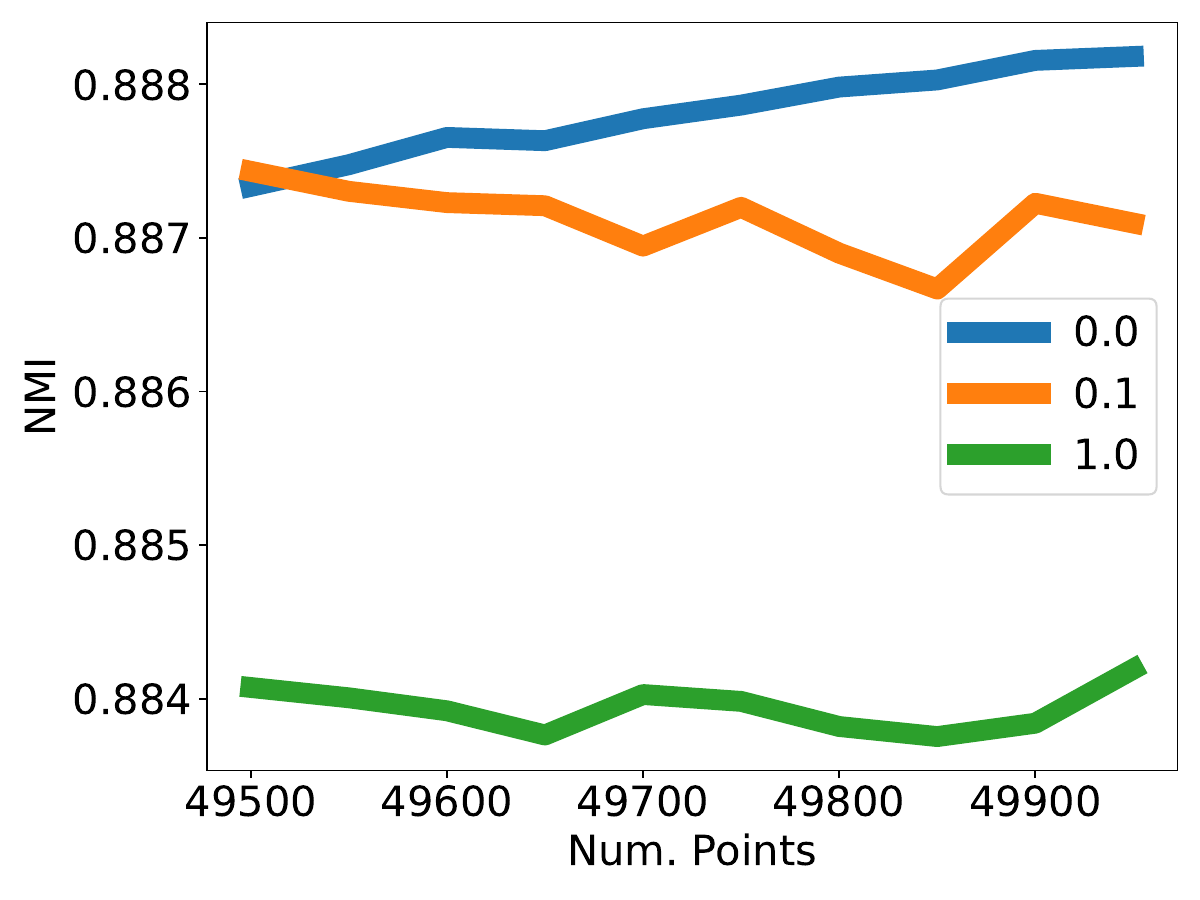}
    \caption{\dynhac{} Deletion with different $\epsilon$ values on ILSVRC.}
    \label{fig:ilvrc_del_epsilon}
\end{figure*}

\begin{table*}[]
    \centering
\begin{tabular}{llrrr}
\toprule
 & epsilon & Clustering & NMI & Speedup \\
Dataset &  &  &  &  \\
\midrule
MNIST & 0.0 & 0.015968 & 0.909773 & 1.000000 \\
MNIST & 0.1 & 0.008260 & 0.896266 & 1.933223 \\
MNIST & 1.0 & 0.006112 & 0.905788 & 2.612785 \\
ALOI & 0.0 & 0.383982 & 0.887300 & 1.000000 \\
ALOI & 0.1 & 0.229246 & 0.886907 & 1.674979 \\
ALOI & 1.0 & 0.132040 & 0.878230 & 2.908074 \\
ILSVRC\_SMALL & 0.0 & 1.413303 & 0.897105 & 1.000000 \\
ILSVRC\_SMALL & 0.1 & 0.796748 & 0.896663 & 1.773840 \\
ILSVRC\_SMALL & 1.0 & 0.308274 & 0.892214 & 4.584564 \\
\bottomrule
\end{tabular}

\begin{tabular}{llrrr}
\toprule
 & epsilon & Clustering & NMI & Speedup \\
Dataset &  &  &  &  \\
\midrule
MNIST & 0.0 & 0.206985 & 0.913558 & 1.000000 \\
MNIST & 0.1 & 0.080998 & 0.912764 & 2.555432 \\
MNIST & 1.0 & 0.049022 & 0.910421 & 4.222281 \\
ALOI & 0.0 & 2.292530 & 0.890944 & 1.000000 \\
ALOI & 0.1 & 1.017535 & 0.890677 & 2.253022 \\
ALOI & 1.0 & 0.517135 & 0.881430 & 4.433139 \\
ILSVRC\_SMALL & 0.0 & 9.178508 & 0.887809 & 1.000000 \\
ILSVRC\_SMALL & 0.1 & 5.057779 & 0.887121 & 1.814731 \\
ILSVRC\_SMALL & 1.0 & 2.154927 & 0.883938 & 4.259313 \\
\bottomrule
\end{tabular}
    \caption{Average running time and NMI of \dynhac{} with different values for $\epsilon$. Speedup is the speedup over $\epsilon=0$.}
    \label{tab:my_label}
\end{table*}

\begin{figure*}
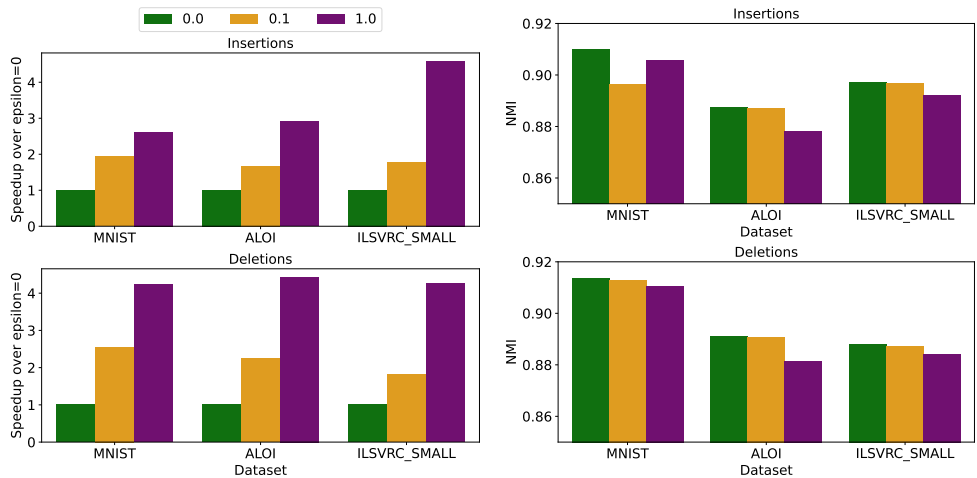

    \centering
    \includegraphics[width=0.8\columnwidth]{figures/time_bar_epsilon.pdf}
    \includegraphics[width=0.8\columnwidth]{figures/nmi_bar_epsilon.pdf}
    \caption{Speedup of \dynhac{} with different epsilon values over $\epsilon=0$, and the NMI values when using different epsilon values.}
    \label{fig:epsilon_bar_full}
\end{figure*}

\fi

\end{document}
`